\newif\ifeptcs
\newif\iffsttcs
\newif\ifieee
\newif\ifllncs
\newif\ifamsart
\ifllncs\usepackage{smallsubsub}\fi
 \newcommand{\myparagraph}[1]{\medskip\noindent\textbf{#1} }
 \newcommand{\kind}[1]{\ensuremath{\mathsf{#1}}}
 \newcommand{\tagname}[1]{\ensuremath{\mathit{#1}\;}}
 \newcommand{\Algebra}{\kind{A}}
 \newcommand{\vr}[1]{\mathit{{#1}}}
 \newcommand{\opr}[1]{\operatorname{\kind{#1}}}
 \newcommand{\EM}{\opr{EM}}
 \newcommand{\EK}{\opr{EK}}
 \newcommand{\EOO}{\opr{EOO}}
 \newcommand{\EOR}{\opr{EOR}}
 \newcommand{\AR}{\opr{AR}}
 \newcommand{\AT}{\opr{AT}}
 \newcommand{\RR}{\opr{RR}}
 \newcommand{\D}{\opr{D}}
 \newcommand{\CF}{\opr{CF}}
 \newcommand{\bnd}{\mathcal{B}}
 \newcommand{\qdot}{\; . \;} 
 \newcommand{\cons}{\,{\hat{\ }}\,}
 \newcommand{\seq}[1]{\langle #1 \rangle}
 \newcommand{\strandnode}[2]{ {#1}\downarrow{#2} }
 \newcommand{\Hear}[2]{\kind{Lsn{#1}}[#2]}
 \newcommand{\enc}[2]{\{\!|#1|\!\}_{#2}}
 \newcommand{\encR}[3]{\{\!|#1|\!\}_{#2}^{#3}}
 \newcommand{\tagged}[2]{[\![\,#1\,]\!]_{#2}}
 \newcommand{\subterm}{\sqsubseteq}
 \newcommand{\term}{\kind{msg\/}}
 \newcommand{\applyrep}[2]{\ensuremath{#1 \cdot #2}}
 \newcommand{\applysub}[2]{\ensuremath{#1 \cdot #2}}
 \newcommand{\multiplicity}[2]{\ensuremath{|#1|_{#2}}}
 \newcommand{\pleads}[2]{\ensuremath{{#2} (#1)}}
 \newcommand{\supfrown}{\ensuremath{{}^{\frown}}}
 \newtheorem{prop}{Proposition}{\bfseries}{\itshape}
 \newtheorem{mylemma}[prop]{Lemma}{\bfseries}{\itshape}
 \newtheorem{thm}{Theorem}{\bfseries}{\itshape}
 \newtheorem{eg}{Example}{\bfseries}{\upshape}
 \newtheorem{cor}[prop]{Corollary}{\bfseries}{\itshape}
 \institute{The MITRE Corporation
 }}\else
 \newtheorem{prop}{Proposition}[section]
 \newtheorem{cor}[prop]{Corollary}
 \newtheorem{thm}[prop]{Theorem}
 \newtheorem{prop}{Proposition}[section]
 \newtheorem{definition}[prop]{Definition}
 \newtheorem{cor}[prop]{Corollary}
 \newtheorem{lemma}[prop]{Lemma}
 \newtheorem{thm}[prop]{Theorem}
 \newtheorem{Assum}{Assumption}
 \def\squareforqed{\hbox{\rlap{$\sqcap$}$\sqcup$}}
 \def\qed{\ifmmode\squareforqed\else{\unskip\nobreak\hfil
 \penalty50\hskip1em\null\nobreak\hfil\squareforqed
 \parfillskip=0pt\finalhyphendemerits=0\endgraf}\fi}
 \newenvironment{proof}{\emph{Proof. }}{{\hspace*{\fill}\qed}}
 \def\squareforqed{\hbox{\rlap{$\sqcap$}$\sqcup$}}
 \def\qed{\ifmmode\squareforqed\else{\unskip\nobreak\hfil
 \penalty50\hskip1em\null\nobreak\hfil\squareforqed
 \parfillskip=0pt\finalhyphendemerits=0\endgraf}\fi}
 \newenvironment{proof}{\emph{Proof. }}{{\hspace*{\fill}\qed}}
 \newcommand{\h}{\ensuremath{\mathsf{h}}}
 \title{Fair Exchange in Strand Spaces\thanks{Funded by MITRE-Sponsored
     Research.  Email:  \texttt{guttman@mitre.org}.}}
 \author{Joshua D.~Guttman\\ The MITRE Corporation and \\ Worcester
   Polytechnic Institute }
\begin{document}

 \maketitle




 \nocite{ChadhaEtAl03}

 \begin{abstract}

   Many cryptographic protocols are intended to \emph{coordinate state
     changes} among principals.  Exchange protocols coordinate delivery
   of new values to the participants, e.g.~additions to the set of
   values they possess.  An exchange protocol is \emph{fair} if it
   ensures that delivery of new values is balanced:  If one participant
   obtains a new possession via the protocol, then all other
   participants will, too.  Fair exchange requires \emph{progress}
   assumptions, unlike some other protocol properties.

   The strand space model is a framework for design and verification of
   cryptographic protocols.  A \emph{strand} is a local behavior of a
   single principal in a single session of a protocol.  A \emph{bundle}
   is a partially ordered global execution built from protocol strands
   and adversary activities.  

   The strand space model needs two additions for fair exchange
   protocols.  First, we regard the state as a multiset of facts, and
   we allow strands to cause changes in this state via multiset
   rewriting.  Second, progress assumptions stipulate that some
   channels are resilient---and guaranteed to deliver messages---and
   some principals are assumed not to stop at certain critical steps.

   This method leads to proofs of correctness that cleanly separate
   protocol properties, such as authentication and confidentiality,
   from invariants governing state evolution.  G. Wang's recent fair
   exchange protocol illustrates the approach.
 \end{abstract}


\section{Introduction} 
\label{sec:intro} 

Many cryptographic protocols are meant to \emph{coordinate state
  changes} between principals in distributed systems.  For instance,
electronic commerce protocols aim to coordinate state changes among a
customer, a merchant, and one or more financial institutions.  The
financial institutions should record credits and debits against the
accounts of the customer and the merchant, and these state changes
should be correlated with state changes at the merchant and the
customer.  The merchant's state changes should include issuing a
shipping order to its warehouse.  The customer records a copy of the
shipping order, and a receipt for the funds from its financial
institution.  The job of the designer of an application-level protocol
like this is to ensure that these changes occur in a coordinated,
transaction-like way.

State changes should occur only when the participants have taken
certain actions, e.g.~the customer must have authorized any funds
transfer that occurs.  Moreover, they should occur only when the
participants have certain joint knowledge, e.g.~that they all agree on
the identities of the participants in the transaction, and the amount
of money involved.  These are \emph{authentication} goals in the
parlance of protocol analysis.  There may also be
\emph{confidentiality} goals that limit joint knowledge.  In our
example, the customer and merchant should agree on the goods being
purchased, which should not be disclosed to the bank, while the
customer and bank should agree on the account number or card number,
which should not be disclosed to the merchant.

\myparagraph{Goal of this paper.}  In this paper, we develop a model
of the interaction of protocol execution with state and state change.
We use our model to provide a proof of a clever fair exchange protocol
due to Guilin Wang~\cite{Wang2006}, modulo a slight correction.  

We believe that the strength of the model is evident in the proof's
clean composition of protocol-specific reasoning with state-specific
reasoning.  In particular, our proof modularizes what it needs to know
about protocol behavior into the four authentication properties given
in Section~\ref{sec:example}, Lemmas~\ref{lemma:auth:init:resp}--\ref{lemma:auth:ttp}.  If any protocol achieves these
authentication goals and its roles obey simple conditions on the
ordering of events, then other details do not matter: it will succeed
as a fair exchange protocol.

A two-party fair exchange protocol is a mechanism to deposit a pair of
values atomically into the states of a pair of principals.  Certified
delivery protocols are a typical kind of fair exchange protocol.  A
certified delivery protocol aims to allow $A$, the sender of a
message, to obtain a digitally signed receipt if the message is
delivered to $B$.  $B$ should obtain the message together with signed
evidence that it came from $A$.  If a session fails, then neither
principal should obtain these values.  If it succeeds, then both
should obtain them.  The protocol goal is to cause state evolution of
these participants to be \emph{balanced}.

The ``fair'' in ``fair exchange'' refers to the balanced evolution of
the state.  ``Fair'' does not have the same sense as in some other
uses in computer science, where an infinitely long execution is
\emph{fair} if any event actually occurs, assuming that it is enabled
in an infinite subsequence of the states in that execution.  In some
frameworks, fairness in this latter sense helps to clarify the
workings of fair exchange
protocols~\cite{CederquistEtAl07,TorabiDashti07}.  However, we show
here how fair exchange protocols can also be understood independent of
this notion of fairness.  When we formalize Wang's
protocol~\cite{Wang2006}, we use an extension of the strand space
model~\cite{GuttmanThayer02} in which there are no infinite executions
or fairness assumptions.

As has been long known~\cite{EvenYacobi80,Rabin81}, a deterministic
fair exchange protocol must rely on a trusted third party $T$.  Recent
protocols generally follow~\cite{AsokanEtAl00} in using the trusted
third party optimistically, i.e.~$T$ is never contacted in the
extremely common case that a session terminates normally between the
two participants.  $T$ is contacted only when one participant does not
receive an expected message.

Each principal $A,B,T$ has a state.  $T$ uses its state to record the
sessions in which one participant has contacted it.  For each such
session, $T$ remembers the outcome---whether $T$ aborted the session
or completed it successfully---so that it can deliver the same outcome
to the other participant.  The states of $A,B$ simply records the
ultimate result of each session in which it participates.  The
protocol guides the state's evolution to ensure balanced changes.

\myparagraph{Strand space extensions.}  Two additions to strand
spaces are needed to view protocols as solving to coordinated state
change problems.  A \emph{strand} is a sequence of actions executed by
a single principal in a single local session of a protocol.

We enrich strands to allow them to \emph{synchronize} with the
projection of the joint state that is local to the principal $P$
executing the strand.  We previously defined the actions on a strand
to be either (1) message transmissions or (2) message receptions.  We
now extend the definition to allow the actions also to be \emph{(3)
  state synchronization events}.  $P$'s state at a particular time may
permit some state synchronization events and prohibit others, so that
$P$'s strands are blocked from the latter behaviors.  Thus, the state
constrains protocol behavior.  Updates to $P$'s state may record
actions on $P$'s strands.

We represent states by multisets of facts, and state change by
multiset rewriting~\cite{MitchellEtAl99,DurginEtAl04}, although with
several differences from Mitchell, Scedrov et al.  First, they use
multiset rewriting to model protocol and communication behavior, as
well as the states of the principals.  We instead use strands for the
protocol and communication behavior.  Our multiset rewriting
represents only changes to a single principal's local state.  Hence,
second, in our rules we do not need existentials, which they used to
model selection of fresh values.  Third, we tend to use ``big'' states
that may have a high cardinality of facts.  However, the big states
are generally sparse, and extremely easy to implement with small data
structures.

We also incorporate \emph{guaranteed progress} assumptions into strand
spaces.  Protocols that establish balance properties need guaranteed
progress.  Since principals communicate by messages, one of
them---call it $A$---must be ready to make its state change first.
Some principal (either $A$ or some third party) must send a message to
$B$ to enable it to make its state change.  If this message never
reaches $B$, $B$ cannot execute its state change.  Hence, in the
absence of a mechanism to ensure progress, $A$ has a strategy---by
preventing future message deliveries---to prevent the joint state from
returning to balance.  

These two augmentations---state synchronization events and a way to
stipulate progress---fit together to form a strand space theory usable
for reasoning about coordinated state change.

\myparagraph{Structure of this paper.}  Section~\ref{sec:example}
describes Wang's protocol.  Two lemmas
(Lemmas~\ref{lemma:auth:init:resp} and~\ref{lemma:auth:ttp}) summarize
the authentication properties that we will rely on.  Any protocol
whose message flow satisfies these two lemmas, and which synchronizes
with state history at the same points, will meet our needs.

Section~\ref{sec:state} introduces our multiset rewriting framework,
proving a locality property.  This property says that state
synchronization events of two different principals are always
concurrent in the sense that they commute.  Hence, coordination
between different principals can only occur by protocol messages, not
directly by state changes.  We also formalize the state facts and
rules for Wang's protocol, inferring central facts about computations
using these rules.  These (very easily verified) facts are summarized
in Lemma~\ref{lemma:GW:computation}.  Any system of rules that
satisfies Lemma~\ref{lemma:GW:computation} will meet our needs.

Section~\ref{sec:progress} gives definitions for guaranteed progress,
applying them to Wang's protocol.  Lemma~\ref{thm:progress}, the key
conclusion of Section~\ref{sec:progress}, says that any compliant
principals executing a session with a session number $L$ can always
proceed to the end of a local run, assuming only that the trusted
third party is ``ready'' to handle sessions labeled $L$.

In Section~\ref{sec:correctness} we put the pieces together to show
that it achieves its balanced state evolution goal.  In particular,
the balance property depends only on Lemmas~\ref{lemma:auth:init:resp}
and~\ref{lemma:auth:ttp} about the protocol structure,
Lemma~\ref{lemma:GW:computation} about the state history mechanism,
and lemma~\ref{thm:progress} about progress.  In this way, the
verification is well-factored into three sharply distinguished
conceptual components.


\section{The Gist of Wang's Protocol}
\label{sec:example}

Wang's fair exchange protocol~\cite{Wang2006} is appealing because it
is short---only three messages in the main exchange
(Fig.~\ref{fig:wang:exch})---and uses only ``generic'' cryptography.
By generic cryptography, Wang means standard digital signatures, and
probabilistic asymmetric encryption such that the random parameter may
be recovered when decryption occurs.  RSA-OAEP is such a scheme.  In
many situations, these advantages will probably outweigh one
additional step in the dispute resolution (see below in this section,
p.~\pageref{point:dispute}).

We write $\enc{t}{k}$ for $t$ encrypted with the key $k$, and
$\encR{t}{k}{r}$ for $t$ encrypted with the key $k$ using recoverable
random value $r$.  We write $\h(t)$ for a cryptographic hash of $t$,
and $\tagged{t}{k}$ for a digital signature on $t$ which may be
verified using key $k$.  By this, we mean $t$ together with a
cryptographic value prepared from $\h(t)$ using $k^{-1}$, the private
signature key corresponding to $k$.  When we use a principal name
$A,B,T$ in place of $k$, we mean that a public key associated with
that principal is used for encryption, as in $\encR{t}{T}{r}$, or for
signature verification, as in $\tagged{t}{A}$.  Message ingredients
such as $\opr{keytag}, \opr{ab\_rq}, \opr{ab\_cf}$, etc., are
distinctive bit-patterns used to tag data, indicate requests or
confirmations, etc.  Our notation differs somewhat from Wang's; for
instance, his $L$ is our $\h(L)$.

\myparagraph{Main exchange.} In the first message
(Fig.~\ref{fig:wang:exch}), $A$ sends the payload $M$ to $B$ encrypted
with a key $K$, as well as $K$ encrypted with the public encryption
key of the trusted third party $T$.
\begin{figure}\hrule \vspace{2mm}
  $$
  \begin{array}{r@{\rightarrow}r@{\colon\quad}l}
    A & B & L\cons\EM\cons\EK\cons\EOO  \\[2mm]
    B & A & \EOR \\[2mm]
    A & B & K\cons R
  \end{array}
  $$
  $$
  \begin{array}[c]{r@{\qquad}c@{\qquad}c}
    \mbox{where:} & L=A\cons B\cons T\cons{\h(\EM)}\cons\h(K) & 
    \EM=\enc{M}K \\
    \EK=\encR{\opr{keytag}\cons \h(L)\cons K}{T}{R} &
    \EOO=\tagged{\opr{eootag}\cons \h(L)\cons \EK}A  &
    \EOR=\tagged{\opr{eortag}\cons \h(L)\cons\EK}B
  \end{array}
$$
  \caption{Wang's protocol:  A Successful Run}
  \label{fig:wang:exch}
\vspace{2mm}\hrule 
\end{figure}
$A$ also sends a digitally signed unit $\EOO$ asserting that the
payload (etc.)  originate with $A$.  The value $L$ serves to identify
this session uniquely.
%
%
In the second message, $B$ countersigns $\h(L),\EK$.  In the third
message, $A$ discloses $K$ and the random value $R$ used originally to
encrypt $K$ for $T$.  $B$ uses this information to obtain $M$, and
also to reconstruct $\EK$, and thus to validate that the hashes inside
$\EOO$ are correctly constructed.  At the end of a successful
exchange, each party deposits the resulting values as a record in its
state repository.

\myparagraph{Abort and recovery subprotocols.} What can go wrong?  If
the signature keys are uncompromised and the random values $K,R$ are
freshly chosen, only two things can fail.  Either $A$ fails to receive
$B$'s countersigned evidence $\EOR$; or else $A$ receives it, but $B$
fails to receive a correct $K,R$.
\begin{enumerate}
  \item If $A$ fails to receive $\EOR$, then $A$ sends the session
  identifier $L$ and a signed abort request $\AR$ to $T$.  $T$ may
  confirm, and certify the session is aborted, sending a countersigned
  $\tagged{\AR}T$.
  \item If $B$ sends $\EOR$ but does not receive $K,R$, then $B$ asks
  $T$ to ``recover'' the session.  To do so, $B$ sends
  $L\cons\EK\cons\EOO\cons\EOR$ to $T$, inside a signed unit $\RR$
  indicating that this is a recovery request.

  $T$ can now decrypt the encrypted key $\EK=\encR{\opr{keytag}\cons
    \h(L)\cons K}{T}{R}$, returning $K\cons R$.  If $T$'s attempt to
  decrypt fails, or yields a values incompatible with the session
  information, then no harm is done:  $A$ will never be able to
  convince a judge that a valid transaction occurred.  Wang's protocol
  returns an error message that we do not show
  here~\cite[Fig.~3]{Wang2006}.
\end{enumerate}
What should happen if $A$ makes an abort request and $B$ also makes a
recovery request, perhaps because $\EOR$ was sent but lost in
transmission?  $T$ services whichever request is received first.  When
the other party's request is received, $T$ reports the result of that
first action.  
\begin{figure}[b]
  \centering
  $$\xymatrix@C=8mm@R=5mm{
    & \bullet\ar[r]^{\D}\ar@{=>}[d]\ar@{=>}[dr] & & A & &
    \quad\ar[r]^{\D} \ar@{.}[dddd]
    &  \bullet\ar@{=>}[d] &    & B \\
    \null\ar[r]^{\EOR} &\bullet\ar@{=>}[d]\ar@{=>}[r] &
    \bullet\ar[r]^{\AR}\ar@{=>}[d]\ar@{=>}[drr] & \null & & \quad
    & \bullet\ar[l]_{\EOR}\ar@{=>}[d]\ar@{=>}[dr] &       &       \\
    \null &\bullet\ar[l]_{K\cons R}\ar@{=>}[d] &\bullet\ar@{=>}[d] &
    \quad\null\quad\ar[l]_{\AT} \ar[r]_{\EOR} & \;\bullet\;\ar@{=>}[d]
    & \quad\ar[r]^{K\cons R} &
    \bullet\ar@{=>}[d] & \bullet\ar[r]^{\RR}\ar@{=>}[d]\ar@{=>}[dr] &       \\
    &\ar@{}[l]|{\mathsf{depEOR()\quad}}\circ &
    \circ\ar@{}[r]|{\mathsf{depAT()}} &&
    \circ\ar@{}[l]|{\mathsf{depEOR()}} & \ar@{}[r]|{\mathsf{depEOO()}}&
    \quad\circ\quad\ar@{ ->}[r]^{K\cons R} & \bullet\ar@{=>}[d] &
    \bullet\ar@{=>}[d] & \ar[l]_{\AT}
    \\
    &&&& &&&\circ\ar@{}[l]|{\mathsf{depEOO()}} &
    \circ\ar@{}[r]|{\mathsf{\quad depAT()}} & }
  $$
  \caption{Initiator (A) and Responder (B) Behavior}
  \label{fig:wang:init:resp}
\end{figure}
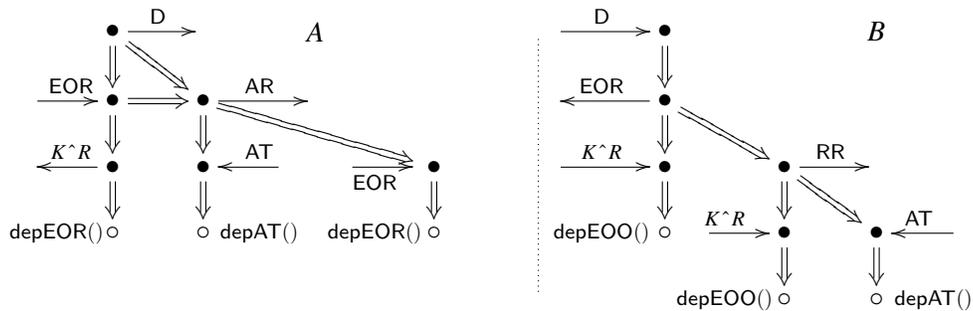
The local behaviors (strands) for $A,B$ in this protocol are shown in
Fig.~\ref{fig:wang:init:resp}.  The local sessions (strands) are the
paths from a root to a terminal node; there are four paths for $A$ and
three paths for $B$.  The solid nodes indicate messages to be sent or
received, while the hollow nodes $\circ$ indicate events in which the
participants deposit results into their state repositories.  This
figure is not precise about the forms of the messages, the parameters
available to each participant at each point in its run, or the
parameters to the state synchronization events.  For instance, $B$
does not know whether a claimed $\EM$ is really of the form $\enc{M}K$
when first receiving it, nor what $M,K$ would produce the message
received.  However, the fairness of the protocol is largely
independent of these details.

$A$'s abort request $\AR$ elicits an abort confirmation
$\tagged{\AR}T$ if it reaches $T$ first, but it elicits a recovery
token $L\cons\EOR$ if $B$'s recovery request was received first.
Likewise, $B$'s recovery request $\RR$ elicits $K\cons R$ if it is
received first, but it elicits the abort confirmation $\tagged{\AR}T$
if $A$'s abort request was received first.  $T$ must synchronize with
its state to ensure that these different requests are serviced in
compatible ways, depending on whichever arrived first.  This
compatibility of responses ensures that $A,B$ will execute balanced
state changes.

These behaviors of the trusted third party $T$, together with an
additional behavior concerned with dispute resolution, are summarized
in Fig.~\ref{fig:wang:ttp}.
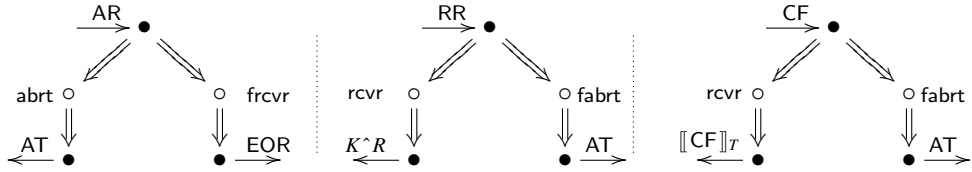
\begin{figure}
  \centering
  $$\xymatrix@C=6mm@R=5mm{
    & \null\ar[r]^{\AR} & \bullet\ar@{=>}[dl]\ar@{=>}[dr] &  &  \ar@{..}[dd]
    & \null\ar[r]^{\RR} & \bullet\ar@{=>}[dl]\ar@{=>}[dr] &  &  \ar@{..}[dd]\\
    & \ar@{}[l]|{\mathsf{abrt}}\circ\ar@{=>}[d] & 
    & \circ\ar@{}[r]|{\mathsf{frcvr}}\ar@{=>}[d] &
    & \ar@{}[l]|{\mathsf{rcvr}}\circ\ar@{=>}[d] &
    & \circ\ar@{}[r]|{\mathsf{fabrt}}\ar@{=>}[d] &   \\
    \null & \ar[l]_{\AT}\bullet && \bullet\ar[r]^{\EOR} & \qquad
    \null & \ar[l]_{K\cons R}\bullet && \bullet\ar[r]^{\AT} & \null
  }\quad 
\xymatrix@C=6mm@R=5mm{
    & \null\ar[r]^{\CF} & \bullet\ar@{=>}[dl]\ar@{=>}[dr] &  &  \\
    & \ar@{}[l]|{\mathsf{rcvr}}\circ\ar@{=>}[d] &
    & \circ\ar@{}[r]|{\mathsf{fabrt}}\ar@{=>}[d] &   \\
    \null & \ar[l]_{\tagged{\CF}T\quad}\bullet && \bullet\ar[r]^{\AT} & \null
  }
  $$
  
  \caption{Trusted Third Party: 
    Abort (left), Resolve (center), and Confirm (right) Requests}
  \label{fig:wang:ttp}
\end{figure}
We have indicated here that $T$'s behavior, in response to an abort
request $\AR$ may lead either to an abort token $\AT$, or else to
evidence of receipt $\EOR$.  Now, the hollow nodes $\circ$
\emph{guard} the choice of branch.  $T$ transmits $\AR$ only after a
${\mathsf{abrt}}$ event, and $\EOR$ only after a ${\mathsf{frcvr}}$
event.  In response to a recovery request $\RR$ from $B$, $T$ may
transmit $K\cons R$ or an abort token $\AT$; however, the former
occurs only after a ${\mathsf{rcvr}}$ event and the latter only after
a ${\mathsf{fabrt}}$ event.  Thus, the essential job for $T$'s long
term state in this protocol is to ensure that if an ${\mathsf{abrt}}$
event occurs for session $L$, then a ${\mathsf{rcvr}}$ never happens
for $L$, and vice versa.  This is easily accomplished by a state-based
mechanism.  

\myparagraph{Dispute Resolution.}  A subtlety in this protocol
concerns dispute resolution.  
\label{point:dispute}
Since $A$ receives $\EOR$ before disclosing $K\cons R$, $A$ could
choose to abort at this point.  A dishonest $A$ could later choose
between proving delivery via $\EOR$ and proving that this session
aborted via the abort token $\AT$.  To prevent this, the protocol
stipulates that a judge resolving disputes queries $B$ or $T$ for an
abort token; it does not accept $A$'s presented $\EOR$ if the abort
token is also available.

However, this is asymmetric.  The abort token is used only by $B$ (or
$T$ on $B$'s behalf) to dispute receipt.  $A$ can never use it to
dispute origin~\cite[Sec.~4.4]{Wang2006}, because of essentially the
same abuse just mentioned.

For simplicity, we will assume that the judge is identical with $T$.
When asked by $A$ to confirm an $\EOR$, $T$ does so if the session has
not aborted.  When confirming an $\EOR$, $T$ must ensure that the
session will never abort in the future, so that an $\EOR$ confirmation
is handled similarly to a recovery request.  If the session has
already aborted, then $T$ returns the abort token instead.

This step may make Wang's protocol undesirable in some cases, where
$T$ may no longer be available for dispute resolution.  It is also why
Wang's protocol can use fewer messages than the four that
Pfitzmann-Schunter-Waidner proved to be needed in a fair exchange
protocol with asynchronous
communication~\cite{PfitzmannSchunterWaidner1998}.

\myparagraph{Our Correction to Wang's Protocol.}  We have adjusted
Wang's protocol.  When $B$'s recovery request arrives after $A$'s
abort request, $B$ receives $\tagged{\AR}T$.  In the original
description, $B$ receives $\AR$ itself.

However, then a dishonest $B$ has a strategy to defeat the fairness of
the protocol.  Namely, after receiving the first message, $B$ does not
reply to $A$, but immediately requests resolution from $T$, generally
receiving $K\cons R$ from $T$.  When $A$ requests an abort from $T$,
$B$ attempts to read this abort request off of the network.  If
successful, $B$ has both $\AR$ and $K\cons R$.  Hence, it can
subsequently choose whether to insist that the message was delivered,
using the valid $\EOO$, or whether to repudiate receipt, using the
$\AR$.

Whether this attack is possible depends on the nature of the channel
between $A$ and $T$.  Under the usual assumption that the channel is
resilient in the sense of ensuring delivery, the attack is possible.
If the channel offers both resilience and confidentiality, then the
attack would be impossible.  We have stipulated that $B$ needs the
countersigned $\tagged{\AR}T$ to make this attack infeasible on the
standard assumption of resiliency only. 

%
%

\myparagraph{Authentication Properties of Wang's Protocol.}  A
\emph{strand} is a (linearly ordered) sequence of nodes
$n_1\Rightarrow\ldots\Rightarrow n_j$, each of which represents
either:
\begin{quote}
  \begin{description}
    \item[Transmission] of some message $\term(n_i)$;
    \item[Reception] of some message $\term(n_i)$; or
    \item[State synchronization] labeled by some \emph{fact}, i.e.~a
    variable-free atomic formula, $E(a_1,\dots,a_k)$.
  \end{description}
\end{quote}
A strand may represent the behavior of a principal in a single local
session of a protocol, in which case it is a \emph{regular} strand of
that protocol, or it may represent a basic adversary activity.  Basic
adversary activities include receiving a plaintext and a key and
transmitting the result of the encryption, and receiving a ciphertext
and its matching decryption key, and transmitting the resulting
plaintext.  We show transmission and reception nodes by bullets
$\bullet$ and state synchronization nodes by hollow circles $\circ$.  

A \emph{protocol} $\Pi$ is a finite set of strands, which are the
\emph{roles} of the protocol.  A strand $s$ is an \emph{instance} of a
role $\rho\in\Pi$, if $s=\applyrep{\rho}\alpha$, i.e.~if $s$ results
from $\rho$ by applying a substitution $\alpha$ to parameters in
$\rho$.

A \emph{bundle} $\bnd$ is a finite directed acyclic graph whose
vertices are strand nodes, and whose arrows are either strand
edges$\Rightarrow$ or communication arrows $\rightarrow$.  A bundle
satisfies three properties:
\begin{enumerate}
  \item If $m\rightarrow n$, then $m$ is a transmission node, $n$ is a
  reception node, and $\term(m)=\term(n)$.
  \item Every reception node $n\in\bnd$ has exactly one incoming
  $\rightarrow$ arrow.
  \item If $n\in\bnd$ and $m\Rightarrow n$, then $m\in\bnd$.
\end{enumerate}
Bundles model possible protocol executions.  Bundles may include both
adversary strands and regular strands.  For more detail, see the
Appendix.

Using this notation, we can state two authentication properties that
involve $A,B$.  We omit a proof, which use digital signatures in an
extremely routine way, given a precise statement of the protocol.

\begin{lemma}\label{lemma:auth:init:resp}
  \begin{enumerate}
    \item Suppose $\bnd$ is a bundle in which $B$'s private signature
    key is uncompromised, and that, in $\bnd$, $A$ reaches a node
    marked ${\opr{depEOR}}$ on a strand with parameters $A,B,T,M,K,R$.
    Then $B$ has executed at least the first two nodes of a responder
    strand, transmitting $\EOR$, on a strand with matching parameters.
    \item Suppose $\bnd$ is a bundle in which $A$'s private signature
    key is uncompromised, and that, in $\bnd$, $B$ reaches a node
    marked ${\opr{depEOO}}$ or ${\opr{depAT}}$ on a strand with
    parameters $A,B,T,\EM,\EK$.  Then $A$ has executed at least the
    first node of an initiator strand, transmitting $\EOO$, on a
    strand with matching parameters.
  \end{enumerate}
\end{lemma}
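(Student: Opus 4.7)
The plan is to use the standard strand-space authentication-by-signature argument, applicable here because each clause of the lemma identifies a digitally signed unit whose signer's key is assumed uncompromised. The key property I would invoke---canonical in the strand space literature---is: if $\tagged{t}{P}$ is a subterm of the message at some node in $\bnd$, and $P$'s private signing key is not possessed by the adversary, then there is a regular transmission node $m\in\bnd$ at which $\tagged{t}{P}$ originates (i.e.~is $\sqsubseteq$-minimal on its strand). My strategy is to apply this lemma once in each part, then inspect the roles of $\Pi$ to determine at which regular nodes such a signed term can originate, and finally match parameters across the two strands.

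For part~(1), I would trace backwards from any node marked $\opr{depEOR}$ on $A$'s strand; by Fig.~\ref{fig:wang:init:resp} every such path passes through a reception node whose message contains $\EOR=\tagged{\opr{eortag}\cons\h(L)\cons\EK}B$. Uncompromise of $B$'s signing key lets me conclude that $\EOR$ originates on a regular strand. Surveying the roles of $\Pi$, the only role that transmits $\EOR$ is the responder role, and this transmission occurs at its second node, following the reception of message~1. Hence the $B$-strand must have executed at least its first two nodes.

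Part~(2) is analogous using $\EOO=\tagged{\opr{eootag}\cons\h(L)\cons\EK}A$. On the $\opr{depEOO}$ path, $B$ obviously receives $\EOO$ in the opening protocol message; on the $\opr{depAT}$ path, $B$ must still have constructed the recovery request $\RR$, which embeds $\EOO$, so $B$ must have previously received $\EOO$ as well. Uncompromise of $A$'s signing key then forces $\EOO$ to originate on a regular strand, and among the roles of $\Pi$ only the initiator role transmits $\EOO$---at its very first node. Hence $A$ has executed at least that one initiator node.

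The main obstacle is not the origination step itself, which is canonical, but the bookkeeping required to match parameters: part~(1) mentions $A,B,T,M,K,R$, and part~(2) mentions $A,B,T,\EM,\EK$, so I must check that each such parameter is recoverable from the signed data modulo the standard collision-freeness assumption on $\h$. Since $L=A\cons B\cons T\cons\h(\EM)\cons\h(K)$ already commits to the identities and to $\h(\EM),\h(K)$, and $\EK$ is a recoverable-randomness encryption of $\opr{keytag}\cons\h(L)\cons K$ under $T$'s key, the remaining identifications (recovering $M,K,R$ for part~(1) and $\EM,\EK$ for part~(2)) follow once collision-freeness is granted. Everything else reduces to the single origination-of-signed-values lemma applied twice.
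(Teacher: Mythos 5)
The paper deliberately omits a proof of this lemma, describing it as a routine application of digital-signature reasoning, and your argument---origination of the signed units $\EOR$ and $\EOO$ at regular nodes of the uncompromised signer, followed by a survey of the roles and parameter matching via $\h(L)$ and $\EK$ under collision-freeness---is exactly that routine argument, so you are taking the same approach the paper intends. One small correction: $T$ also transmits $\EOR$ (on the $\mathsf{frcvr}$ branch of Fig.~\ref{fig:wang:ttp}), so your survey in part~(1) must be of where $\EOR$ can \emph{originate} rather than where it is merely transmitted; since $\EOR$ reaches $T$ only inside a received $\RR$ or via a prior state synchronization on the same strand, it cannot originate there, and your conclusion stands.
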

Two authentication properties involving $T$ are also routine
applications of rules for digital signatures.
\begin{lemma}\label{lemma:auth:ttp}
  \begin{enumerate}
    \item Suppose $\bnd$ is a bundle in which $A$ and $T$'s private
    signature keys are uncompromised, and that, in $\bnd$, $A$ reaches
    a node marked ${\opr{depAT}}$.  Then $T$ has completed a strand
    transmitting $\AT$ with matching parameters.
    \item Suppose $\bnd$ is a bundle in which $A$ and $T$'s private
    signature keys are uncompromised.  If, in $\bnd$, $B$ reaches a
    node marked ${\opr{depAT}}$, then:
    \begin{enumerate}
      \item $A$ has reached the second node of an aborting strand,
      transmitting $\AR$, on a strand with matching
      parameters.\label{clause:B:AR:guarantee}
      \item $T$ has reached node transmitting $\AT$ in response to a
      recovery query $\RR$ with matching
      parameters.\label{clause:B:AT:guarantee}
    \end{enumerate}
    If instead $B$ reaches a node marked ${\opr{depEOO}}$ then either
    $A$ has transmitted $K\cons R$, or else $T$ has transmitted
    $K\cons R$.
  \end{enumerate}
\end{lemma}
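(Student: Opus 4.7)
The plan is to back-trace, from each named deposit node, the local strand events that must precede it (reading off the roles from Fig.~\ref{fig:wang:init:resp}) to isolate the signed term received just before the deposit, and then invoke the standard strand-space signature-authentication principle: a term $\tagged{t}{K}$ originating in a bundle with $K^{-1}$ uncompromised must originate on a regular strand of the owner of $K^{-1}$. For part~1, the $A$-node marked $\opr{depAT}$ is preceded on $A$'s strand by reception of $\AT=\tagged{\AR}T$; since $T$'s signing key is uncompromised, $\AT$ originates on a regular $T$-strand, and each role of $T$ that emits $\AT$ does so at the terminal node of a completed branch (Fig.~\ref{fig:wang:ttp}), so $T$ has completed such a strand, with matching parameters recovered by parsing $L=A\cons B\cons T\cons\h(\EM)\cons\h(K)$ out of $\AR$.

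Part~2(a) applies the same principle one layer of nesting deeper: $B$'s reception of $\AT=\tagged{\AR}T$ contains the inner signed term $\AR=\tagged{\opr{ab\_rq}\cons\h(L)}A$, which, because $A$'s signing key is uncompromised, must originate on an $A$-strand. The only role-node at which $A$ signs and transmits $\AR$ is the second node of the aborting branch of Fig.~\ref{fig:wang:init:resp}, with parameters fixed by $L$, which is exactly the stated conclusion.

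Part~2(b) is the step I expect to be the main obstacle, because $T$ emits $\AT$ on three different role-branches (abort-abrt, recovery-fabrt, and confirm-fabrt), and raw signature authentication alone does not discriminate among them. The plan is to exploit \emph{matching parameters} in the strong sense: $B$'s $\opr{depAT}$ node is parameterized by $B$'s full local view---in particular by the $\EOR$ that $B$ itself produced, together with the $\EK$ and $\EOO$ that $B$ received earlier---and only the recovery-fabrt branch of $T$ admits a parameterization that binds all of these values, because only $\RR$ (and neither $\AR$ nor $\CF$) carries $\EOR$, $\EOO$, and $\EK$ to $T$. This rules out the abort-abrt and confirm-fabrt branches and forces the originating $T$-branch to be recovery-fabrt. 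Finally, for the $\opr{depEOO}$ sub-claim, $B$'s strand structure forces a prior reception of $K\cons R$; a uniqueness-of-origin argument for the freshly chosen $K,R$ pair pins down the only regular role-nodes that can first emit them, namely $A$'s third main-exchange transmission and $T$'s recovery-rcvr node, which gives the required disjunction.
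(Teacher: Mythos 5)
The paper does not actually give a proof of this lemma---it is explicitly waved off as a ``routine application of rules for digital signatures''---so there is no official argument to compare against line by line. Your treatment of part~1, part~2(a), and the final ${\opr{depEOO}}$ sub-claim is sound and is surely what the authors have in mind: back-trace along the role in Fig.~\ref{fig:wang:init:resp} to the reception of the signed token, apply origination of $\tagged{\AR}T$ (resp.\ the nested $\tagged{\opr{ab\_rq}\cons\h(L)}A$) on a regular strand of the key's owner, and, for the $K\cons R$ disjunction, use the fact that $K$ and $R$ occur only inside $\EK=\encR{\opr{keytag}\cons\h(L)\cons K}{T}{R}$ and $\enc{M}K$ until $A$ or $T$ deliberately releases them.

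The gap is in part~2(b), and your proposed repair does not close it. The token $B$ receives before ${\opr{depAT}}$ is $\AT=\tagged{\tagged{\opr{ab\_rq}\cons\h(L)}A}T$, and this is bit-for-bit the \emph{same message} on all three of $T$'s $\AT$-emitting branches in Fig.~\ref{fig:wang:ttp} (abort/$\mathsf{abrt}$, resolve/$\mathsf{fabrt}$, confirm/$\mathsf{fabrt}$) for a given session label $\ell$: it contains only $\h(L)$, not $\EOR$, $\EOO$, or $\EK$. ``Matching parameters'' in a signature-authentication conclusion can only constrain parameters that the authenticated message actually binds, so nothing forces the originating $T$-strand to share $B$'s values of $\EOR$, $\EOO$, $\EK$, and nothing rules out the scenario in which $A$ aborts first, $T$ answers $A$'s $\AR$ on the abort branch, $B$'s $\RR$ is lost, and the adversary forwards $T$'s abort-branch $\AT$ to $B$. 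In that bundle $B$ reaches ${\opr{depAT}}$ but $T$ has never responded to an $\RR$, so clause~2(b) as you argue it---by message content alone---is simply not derivable. To pin the response to the resolve branch you need ingredients from outside pure signature reasoning: the resilient-channel assumption (Def.~\ref{defn:guar:del}) that $B$'s transmitted $\RR$ is actually received by $T$, together with Lemma~\ref{lemma:GW:computation}, Clause~\ref{clause:rcvr:abort:exclusive}, which forces $T$'s answer to that $\RR$ to be the $\mathsf{fabrt}$/$\AT$ branch rather than $K\cons R$ once any abort has been recorded. (Note, though, that the downstream use of this lemma in Theorem~\ref{thm:b:completes} only needs that $T$'s state records the session as aborted, which \emph{does} follow from signature authentication alone; the ``in response to $\RR$'' phrasing is stronger than what is later consumed.)
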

Clause~(\ref{clause:B:AT:guarantee}) is the part of
Lemma~\ref{lemma:auth:ttp} that would be untrue without our adjustment
to Wang's protocol.  If $B$ receives only $\AR$, then
Clause~(\ref{clause:B:AR:guarantee}) holds, but not necessarily
Clause~(\ref{clause:B:AT:guarantee}).  This means that $T$'s state
might not reflect the abort.


\section{Protocol Behavior and Mutable State}
\label{sec:state}

We formalize state change using multiset
rewriting~\cite{MitchellEtAl99,DurginEtAl04}.  
%
%
Strands
contain special \emph{state synchronization events} that synchronize
them with the state of the principal executing the strands, as
formalized in Definition~\ref{defn:compatible}.

\subsection{Multiset rewriting to maintain state} 

We formalize mutable state using MSR.  A state is a multiset of ground
facts $F(t_1,\ldots,t_i)$, where each $F(t_1,\ldots,t_i)$ is the
application of a predicate $F$ to some sequence $t_1,\ldots,t_i$.
These arguments are messages, and thus do not contain variables;
hence, a state $\Sigma$ is a multiset of ground facts.  We write a
vector of messages $t,\ldots,t'$ in the form $\vec{t}$.

A rewrite rule $\rho$ takes the form:
$$D(\vec{t_0}),\ldots, F(\vec{t_1})
\stackrel{E(\vec{t_2})}{\longrightarrow}
G(\vec{t_3}), \ldots, H(\vec{t_4})$$ 
where now the arguments $\vec{t_0},\dots,\vec{t_3}$ are vectors of
parametric message terms that may contain variables.  When replacing
these variables with messages, we obtain ground facts.
Unlike~\cite{DurginEtAl04}, we label our transitions with a fact
${E(\vec{t_2})}$, but we do not require existential quantifiers in the
conclusions of rules.  We will assume that every variable free in
$\vec{t_0},\vec{t_1},\vec{t_3},\vec{t_4}$ is also free in $\vec{t_2}$.
Thus, a ground instance of ${E(\vec{t_2})}$ determines ground
instances of all the facts $D(\vec{t_0}),\ldots, F(\vec{t_1}),
G(\vec{t_3}), \ldots, H(\vec{t_4})$.

We write $\kind{lhs}(\rho)$ for $D(\vec{t_0}),\ldots, F(\vec{t_1})$;
we write $\kind{rhs}(\rho)$ for $G(\vec{t_3}), \ldots, H(\vec{t_4})$;
and $\kind{lab}(\rho)$ for ${E(\vec{t_2})}$.

A rule stipulates that the state can change by consuming instances of
the facts in its left-hand side, and producing the corresponding
instances of the facts in its right hand side.  These sets of facts
may overlap, in which case the facts in the overlap are required for
the rule to apply, but preserved when it executes.
A rewrite rule $\rho$ applies to a state $\Sigma_0$ when, for some
substitution $\sigma$,
$$\Sigma_0=\Sigma_0', D(\applysub{\vec{t_0}}{\sigma}),\ldots,
F(\applysub{\vec{t_1}}{\sigma}),$$
i.e., $\Sigma_0$ is the multiset union of $\Sigma_0'$ with instances
of the premises of $\rho$ under $\sigma$.  The result of applying
$\rho$ to $\Sigma_0$, using substitution $\sigma$, is
$$\Sigma_0', G(\applysub{\vec{t_3}}{\sigma}),\ldots,
H(\applysub{\vec{t_4}}{\sigma}).$$
Since this is a state, the facts
$G(\applysub{\vec{t_3}}{\sigma}),\ldots,
H(\applysub{\vec{t_4}}{\sigma})$ must again be ground; i.e.~$\sigma$
must associate the variables of $\vec{t_3},\ldots,\vec{t_4}$ with
variable-free messages.  There may be variables in
$\vec{t_3},\ldots,\vec{t_4}$ that do not occur in $\vec{t_0},\ldots,
{\vec{t_1}}$.  These variables take values nondeterministically, from
the point of view of the prior state.  In an execution, they may be
determined by protocol activities synchronized with the state.  Our
assumption about the variables in ${E(\vec{t_2})}$ ensures each ground
instance of ${E(\vec{t_2})}$ determines a $\sigma$ under which
$\vec{t_3},\ldots,\vec{t_4}$ become ground, and $t_2$ so to speak
summarizes all choices of values for variables.  
\begin{definition}
  Let $\rho=D(\vec{t_0}),\ldots, F(\vec{t_1})
  \stackrel{E(\vec{t_2})}{\longrightarrow} G(\vec{t_3}), \ldots,
  H(\vec{t_4})$.

  $\Sigma_0\stackrel{\rho,\sigma}{\longrightarrow}\Sigma_1$ a
  $\rho,\sigma$ transition from $\Sigma_0$ to $\Sigma_1$ iff
  $\Sigma_0,\Sigma_1$ are ground, and there exists a $\Sigma_0'$ such
  that $\Sigma_0=\Sigma_0', D(\applysub{\vec{t_0}}{\sigma}),\ldots,
  F(\applysub{\vec{t_1}}{\sigma})$ and
  $\Sigma_1=\Sigma_0',G(\applysub{\vec{t_3}}{\sigma}),\ldots,
  H(\applysub{\vec{t_4}}{\sigma})$.

  A \emph{computation} $\mathcal{C}$ is finite path through states via
  transitions; i.e.
  $\mathcal{C}=\Sigma_0\;\stackrel{\rho_0,\sigma_0}{\longrightarrow}\;
  \Sigma_1 \;\stackrel{\rho_1,\sigma_1}{\longrightarrow}\ldots
  \stackrel{\rho_j,\sigma_j}{\longrightarrow}\; \Sigma_{j+1}$.
  $\mathcal{C}$ is \emph{over} a set of rules $R$ if each $\rho_i\in
  R$.  When no ambiguity results, we will also write $\mathcal{C}$ in
  the form:
  $$\mathcal{C}=\Sigma_0\;
  \stackrel{E_0(\applysub{\vec{t_0}}{\sigma_0})}{\longrightarrow}\;\Sigma_1
  \;\stackrel{E_1(\applysub{\vec{t_1}}{\sigma_1})}{\longrightarrow}\ldots
  \stackrel{E_j(\applysub{\vec{t_j}}{\sigma_j})}{\longrightarrow}\;
  \Sigma_{j+1}.$$
  We write $\kind{first}(\mathcal{C})$ for $\Sigma_0$ and
  $\kind{last}(\mathcal{C})$ for $\Sigma_{j+1}$.  
\end{definition}
%
%
%
In this lemma, we interpret $\setminus,\cup,\subseteq$ as the multiset
difference, union, and subset operators.  
\begin{lemma}
  \label{lemma:local:concurrent} 
  Suppose $(\applysub{\kind{lhs}(\rho_1)}{\sigma_1}) \cup
  (\applysub{\kind{lhs}(\rho_2)}{\sigma_2}) \subseteq \Sigma_0$.
  If $\Sigma_0 \, \stackrel{\rho_1,\sigma_1}{\longrightarrow} \, \Sigma_1 \,
  \stackrel{\rho_2,\sigma_2}{\longrightarrow} \, \Sigma_2$, then
  $$\exists \Sigma_1'\qdot \Sigma_0 \,
  \stackrel{\rho_2,\sigma_2}{\longrightarrow} \, \Sigma_1' \,
  \stackrel{\rho_1,\sigma_1}{\longrightarrow} \, \Sigma_2.$$
\end{lemma}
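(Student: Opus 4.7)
The plan is to verify the lemma by a direct, constructive multiset calculation, exhibiting the intermediate state $\Sigma_1'$ explicitly and checking that both transitions out of $\Sigma_0$ via the swapped order land in $\Sigma_2$.

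First, I would introduce abbreviations $L_i = \applysub{\kind{lhs}(\rho_i)}{\sigma_i}$ and $R_i = \applysub{\kind{rhs}(\rho_i)}{\sigma_i}$ for $i=1,2$. The crucial hypothesis $L_1 \cup L_2 \subseteq \Sigma_0$ lets me write $\Sigma_0 = \Sigma_0'' \cup L_1 \cup L_2$ as multisets, for some ground multiset $\Sigma_0''$. This decomposition is the workhorse; everything else is bookkeeping.

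Next, I would compute what the given forward trajectory forces. From the definition of $\stackrel{\rho_1,\sigma_1}{\longrightarrow}$ we get $\Sigma_1 = \Sigma_0'' \cup L_2 \cup R_1$, and since $L_2 \subseteq \Sigma_1$, the subsequent transition $\stackrel{\rho_2,\sigma_2}{\longrightarrow}$ yields $\Sigma_2 = \Sigma_0'' \cup R_1 \cup R_2$. Now I would define $\Sigma_1' = \Sigma_0'' \cup L_1 \cup R_2$ and verify the two swapped transitions: applying $\rho_2, \sigma_2$ to $\Sigma_0 = (\Sigma_0'' \cup L_1) \cup L_2$ consumes $L_2$ and produces $R_2$, giving $\Sigma_1'$; applying $\rho_1,\sigma_1$ to $\Sigma_1' = (\Sigma_0'' \cup R_2) \cup L_1$ consumes $L_1$ and produces $R_1$, giving $\Sigma_0'' \cup R_2 \cup R_1 = \Sigma_2$.

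The only delicate point -- and where the strength of the hypothesis matters -- is the multiset algebra: one must be careful that $L_2 \subseteq \Sigma_0$ and $L_1 \subseteq \Sigma_0$ separately would not be enough. If $\rho_1$ and $\rho_2$ both consume a common fact $F$, then the forward trajectory relies on two copies of $F$ being present, and the hypothesis $L_1 \cup L_2 \subseteq \Sigma_0$ (multiset union, not set union) supplies exactly this. I would flag this explicitly, since it is the content of the ``locality'' slogan: rules that jointly fit into the current state do not interfere, and in particular the right-hand side $R_1$ of the first rule is never required by the left-hand side of the second. No existentials arise in conclusions (by design of the framework), so $R_1$ and $R_2$ are fully determined by $\sigma_1,\sigma_2$, which makes the diamond literally commute rather than only commute up to renaming. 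That is the main -- and really only -- subtlety; the rest is a mechanical check.
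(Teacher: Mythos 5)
Your proposal is correct and takes essentially the same route as the paper: both arguments exhibit the intermediate state explicitly as $\Sigma_1'=(\Sigma_0\setminus\applysub{\kind{lhs}(\rho_2)}{\sigma_2})\cup\applysub{\kind{rhs}(\rho_2)}{\sigma_2}$ (your $\Sigma_0''\cup L_1\cup R_2$) and check the diamond by direct multiset bookkeeping. Your explicit decomposition $\Sigma_0=\Sigma_0''\cup L_1\cup L_2$ and the remark that the \emph{multiset} union hypothesis is what supplies two copies of any shared fact simply spell out the step the paper compresses into ``by the assumption.''
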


\begin{proof}
  $\Sigma_1=(\Sigma_0 \setminus
  (\applysub{\kind{lhs}(\rho_1)}{\sigma_1})) \cup
  (\applysub{\kind{rhs}(\rho_1)}{\sigma_1})$, and $\Sigma_2=
  (\Sigma_1\setminus (\applysub{\kind{lhs}(\rho_2)}{\sigma_2})) \cup
  (\applysub{\kind{rhs}(\rho_2)}{\sigma_2})$.  We define
  $\Sigma_1'=(\Sigma_0 \setminus
  (\applysub{\kind{lhs}(\rho_2)}{\sigma_2})) \cup
  (\applysub{\kind{rhs}(\rho_2)}{\sigma_2})$.  By the assumption,
  $\Sigma_2 = (\Sigma_1'\setminus
  (\applysub{\kind{lhs}(\rho_1)}{\sigma_1})) \cup
  (\applysub{\kind{rhs}(\rho_1)}{\sigma_1})$.
\end{proof}

\subsection{Locality to principals} 


In our manner of using MSR, all manipulation of state is local to a
particular principal, and coordination among different principals
occurs only through protocol behavior represented on strands.
\begin{definition}
  A set of rewrite rules $R$ is \emph{localized to principals}, if,
  for a single distinguished variable $p$, for every rule $\rho\in R$,
  for each fact $F(\vec{t})$ occurring in $\rho$ as a premise or
  conclusion, $F(\vec{t})$ is of the form $F(p,\vec{t'})$.

  The \emph{principal of} a transition
  $\Sigma_0\stackrel{\rho,\sigma}{\longrightarrow}\Sigma_1$ is
  $\applysub{p}{\sigma}$.  
\end{definition}
Thus, only the principal of a transition
$\Sigma_0\stackrel{\rho,\sigma}{\longrightarrow}\Sigma_1$ is affected
by it.  Transitions with different principals are always concurrent.
If $\applysub{p}{\sigma_1}\not=\applysub{p}{\sigma_2}$ and
$({\rho_1,\sigma_1}), ({\rho_2,\sigma_2})$ can happen, so can the
reverse, with the same effect:
\begin{cor}
  \label{prop:local:concurrent}
  Let $R$ be localized to principals, with $\rho_1,\rho_2\in R$, and
  $\applysub{p}{\sigma_1}\not=\applysub{p}{\sigma_2}$.  If $\Sigma_0
  \, \stackrel{\rho_1,\sigma_1}{\longrightarrow} \, \Sigma_1 \,
  \stackrel{\rho_2,\sigma_2}{\longrightarrow} \, \Sigma_2$, then
  $\Sigma_0 \, \stackrel{\rho_2,\sigma_2}{\longrightarrow} \,
  \Sigma_1'\, \stackrel{\rho_1,\sigma_1}{\longrightarrow} \,
  \Sigma_2$, for some $\Sigma_1'$.
\end{cor}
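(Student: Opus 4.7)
The plan is to reduce the corollary to Lemma~\ref{lemma:local:concurrent}. That lemma already commutes two successive transitions provided both premise multisets embed into the initial state $\Sigma_0$. We have one embedding for free (since $\Sigma_0 \stackrel{\rho_1,\sigma_1}{\longrightarrow}\Sigma_1$ forces $\applysub{\kind{lhs}(\rho_1)}{\sigma_1}\subseteq\Sigma_0$), so the real task is to establish the other embedding, namely $\applysub{\kind{lhs}(\rho_2)}{\sigma_2}\subseteq\Sigma_0$, starting only from $\applysub{\kind{lhs}(\rho_2)}{\sigma_2}\subseteq\Sigma_1$.

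To do this, I would exploit locality as follows. By the definition of localization to principals, every fact occurring in $\rho_1$ (as premise or conclusion) has first argument $p$, and so every fact in $\applysub{\kind{rhs}(\rho_1)}{\sigma_1}$ has first argument $\applysub{p}{\sigma_1}$. Similarly every fact in $\applysub{\kind{lhs}(\rho_2)}{\sigma_2}$ has first argument $\applysub{p}{\sigma_2}$. Because the hypothesis gives $\applysub{p}{\sigma_1}\neq\applysub{p}{\sigma_2}$, no instantiated fact on the right-hand side of the first transition can coincide with any instantiated fact in the premises of the second transition; they differ in their very first coordinate. Hence as multisets $\applysub{\kind{lhs}(\rho_2)}{\sigma_2}$ and $\applysub{\kind{rhs}(\rho_1)}{\sigma_1}$ are disjoint.

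Since $\Sigma_1=(\Sigma_0\setminus\applysub{\kind{lhs}(\rho_1)}{\sigma_1})\cup\applysub{\kind{rhs}(\rho_1)}{\sigma_1}$, the multiplicity in $\Sigma_1$ of any fact in $\applysub{\kind{lhs}(\rho_2)}{\sigma_2}$ equals its multiplicity in $\Sigma_0\setminus\applysub{\kind{lhs}(\rho_1)}{\sigma_1}$ (the contribution from the right-hand side of $\rho_1$ is zero by disjointness), and hence at most its multiplicity in $\Sigma_0$. Consequently $\applysub{\kind{lhs}(\rho_2)}{\sigma_2}\subseteq\Sigma_0$. Combined with $\applysub{\kind{lhs}(\rho_1)}{\sigma_1}\subseteq\Sigma_0$ this gives $(\applysub{\kind{lhs}(\rho_1)}{\sigma_1})\cup(\applysub{\kind{lhs}(\rho_2)}{\sigma_2})\subseteq\Sigma_0$, so Lemma~\ref{lemma:local:concurrent} applies and produces the desired $\Sigma_1'$.

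The only delicate point is the multiset-disjointness argument of the second paragraph: one must resist the temptation to argue set-theoretically and instead track multiplicities, since in principle a fact could appear with multiplicity greater than one. Locality saves the day because distinct principal tags in the leading coordinate make the relevant facts literally distinct multiset elements, so their counts in $\Sigma_1$ and in $\Sigma_0$ coincide up to the subtraction of $\applysub{\kind{lhs}(\rho_1)}{\sigma_1}$, which can only decrease, not increase, what is available for $\rho_2$.
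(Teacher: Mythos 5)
Your proposal is correct and follows essentially the same route as the paper's own proof: it uses locality to show that the facts produced by $\applysub{\rho_1}{\sigma_1}$ are disjoint from the premises of $\applysub{\rho_2}{\sigma_2}$ (since they carry distinct principals in their first coordinate), concludes that $\rho_2,\sigma_2$ is already enabled in $\Sigma_0$, and then invokes Lemma~\ref{lemma:local:concurrent}. The only difference is that you spell out the multiplicity bookkeeping that the paper leaves implicit, which is a harmless (indeed welcome) elaboration rather than a different argument.
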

\begin{proof}
  Since $\applysub{p}{\sigma_1}\not=\applysub{p}{\sigma_2}$, the facts
  on the right hand side of $\applysub{\rho_1}{\sigma_1}$ are disjoint
  from those on the left hand side of $\applysub{\rho_2}{\sigma_2}$.
  Hence, ${\rho_2,\sigma_2}$ being enabled in $\Sigma_1$, it must also
  be enabled in $\Sigma_0$.  Hence,
  $(\applysub{\kind{lhs}(\rho_1)}{\sigma_1}) \cup
  (\applysub{\kind{lhs}(\rho_2)}{\sigma_2})\subset\Sigma_0$, and we
  may apply Lemma~\ref{lemma:local:concurrent}.  
\end{proof}
The following definition connects bundles with computations.
\begin{definition}
  \label{defn:compatible}
  Let $R$ be localized to principals.

  \begin{enumerate}
  \item An \emph{eventful protocol} $\Pi$ is a finite set of roles
    containing nodes of three kinds:
    \begin{enumerate}
      \item transmission nodes $+t$, where $t$ is a message;
      \item reception nodes $-t$, where $t$ is a message; and
      \item state synchronization events $E_i(p,\vec{t})$.
    \end{enumerate}
    We require that if $E_i(p,\vec{t})$ and $E_j(p',\vec{t'})$ lie on
    the same strand, then $p=p'$.  If a strand $s$ contains a state
    synchronization $E_i(p,\vec{t})$, then $p$ is \emph{the principal
      of} $s$.
    \item Suppose that $\bnd$ is a bundle over the eventful protocol
    $\Pi$; $\mathcal{C}$ is a finite computation for the rules $R$;
    and $\phi$ is a bijection between state synchronization nodes of
    $\bnd$ and transitions $E_i(\vec{t_i})$ of $\mathcal{C}$.  $\bnd$
    is \emph{compatible with} $\mathcal{C}$ \emph{under} $\phi$ iff
    \begin{enumerate}
      \item The event $E_i(p,\vec{t})$ at $n$ is the label on
      $\phi(n)$, and
      \item $n_0\preceq_{\bnd} n_1$ implies $\phi(n_0)$ precedes
      $\phi(n_1)$ in $\mathcal{C}$.
    \end{enumerate}
    \item An \emph{execution} of $\Pi$ \emph{constrained by} $R$ is a
    triple $(\bnd,\mathcal{C},\phi)$ where $\bnd$ is compatible
    with $\mathcal{C}$ under $\phi$.  
  \end{enumerate}
\end{definition}
If $(\bnd,\mathcal{C},\phi)$ is an execution, then it represents
possible protocol behavior $\bnd$ for $\Pi$, where state-sensitive
steps are constrained by the state maintained in $\mathcal{C}$.
Moreover, the state $\mathcal{C}$ evolves as driven by state
synchronizations occurring in strands appearing in $\bnd$.  The
bijection $\phi$ makes explicit the correlation between events in the
protocol runs of $\bnd$ and transitions occurring in $\mathcal{C}$.

\subsection{States and Rules for Wang's Protocol}
\label{sec:state:wang}

\myparagraph{Trusted Third Party State.} Conceptually, the trusted
third party $T_0$ maintains a status record for each possible
transaction it could be asked to abort or recover.  Since each
transaction is determined by a label
$\mathcal{L}_m(\vr{hm},\vr{hk})=A\cons B\cons
T\cons\vr{hm}\cons\vr{hk}$, where $T=T_0$, it maintains a fact for
each such value.  This fact indicates either (1) that the no message
has as yet been received in connection with this session; or (2) that
the session has been recovered, in which case the evidence of receipt
is also kept in the record; or (3) that the session has been aborted,
in which case the signed abort request is also kept in the record.
Thus, the state record for the session with label
$\ell=\mathcal{L}_m(\vr{hm},\vr{hk})$ is a fact of one of the three
forms:
$$  
\opr{unseen}(T,\ell) \qquad \opr{recovered}(T,\ell, \EOR) \qquad
\opr{aborted}(T,\ell,{\AT})
$$
Naturally, a programmer will maintain a sparse representation of this
state, in which only the last two forms are actually stored.  A query
for $\ell$ that retrieves nothing indicates that the session $\ell$ is
as yet unseen.

Four types of events synchronize with $T$'s state.  The event
$\opr{rcvr}(\ell,e)$ deposits a $\opr{recovered}(\ell, e)$ fact into
the state, and requires the state to contain either an
$\opr{unseen}(\ell)$ fact or a preexisting $\opr{recovered}(\ell, e)$
fact with the same $e$, which are consumed.
$$  
\xymatrix@R=2mm{
  { \opr{unseen}(T,\ell) }\ar[rr]^{\opr{rcvr}(T,\ell,e)} & &
  {\opr{recovered}(T,\ell,e) } \\
  { \opr{recovered}(T,\ell,e) }\ar[rr]^{\opr{rcvr}(T,\ell,e)} & & 
  { \opr{recovered}(T,\ell,e) }
}
$$
The second of these forms ensures that repeated $\opr{rcvr}$ events
succeed, with no further state change.  

The event $\opr{abrt}(T,\ell,a)$ deposits a $\opr{aborted}(T,\ell,a)$ fact
into the state, and requires the state to contain either an
$\opr{unseen}(T,\ell)$ fact or a preexisting $\opr{aborted}(T,\ell,a)$
fact, which are consumed.
$$  
\xymatrix@R=2mm{
  { \opr{unseen}(T,\ell) }\ar[rr]^{\opr{abrt}(T,\ell,a)} & &
  { \opr{aborted}(T,\ell,a) } \\
  { \opr{aborted}(T,\ell,a) }\ar[rr]^{\opr{abrt}(T,\ell,a)} & & 
  { \opr{aborted}(T,\ell,a) }
}
$$
Finally, there is an event for a forced recover
${\opr{frcvr}(T,\ell,e)}$ and one for a forced abort
${\opr{fabrt}(T,\ell,a)}$.  These may occur when the $\opr{recovered}$
fact [or respectively, the $\opr{aborted}$ fact] is already present,
so that attempt to abort [or respectively, to recover] must yield the
opposite result.
$$  
\xymatrix@R=2mm{
  { \opr{recovered}(T,\ell,e) }\ar[rr]^{\opr{frcvr}(T,\ell,e)} & &
  { \opr{recovered}(T,\ell,e) } \\
  { \opr{aborted}(T,\ell,a) }\ar[rr]^{\opr{fabrt}(T,\ell,a)} & & 
  { \opr{aborted}(T,\ell,a) }
}
$$

\begin{definition}
\label{def:wang:initial}
A \emph{$\kind{GW}$ initial state} is a multiset $\Sigma$ such that:
\begin{enumerate}
  \item No fact $\opr{recovered}(T,\ell,e)$ or $\opr{aborted}(T,\ell,a)$
  is present in $\Sigma$;
  \item For all $\ell$, the multiplicity
  $\multiplicity{\opr{unseen}(T,\ell)}{\Sigma}$ of $\opr{unseen}(T,\ell)$
  in $\Sigma$ is at most 1.
\end{enumerate}  
$\mathcal{C}$ is a \emph{$\kind{GW}$ computation} if it is a
computation using the set $R_W$ of the six rules above, starting from
a $\kind{GW}$ initial state $\Sigma_0$.  
\end{definition}
There are several obvious consequences of the definitions.  The first
says that the multiplicity of facts for a single session $\ell$ does
not increase, and initially starts at 0 or 1, concentrated in
$\opr{unseen}(T,\ell)$.  The next two say that a
${\opr{recovered}(T,\ell,e)}$ fact arises only after a
${\opr{rcvr}(T,\ell,e)}$ event, and a ${\opr{aborted}(T,\ell,a)}$ fact
after an ${\opr{abrt}(T,\ell,e)}$ event.  Then we point out that a
${\opr{rcvr}(T,\ell,e)}$ event and an ${\opr{abrt}(T,\ell,a)}$ event
never occur in the same computation, and finally that a
${\opr{rcvr}(T,\ell,e)}$ event must precede a
${\opr{frcvr}(T,\ell,e)}$ event, and likewise for aborts and forced
aborts.
\begin{lemma}
\label{lemma:GW:computation}
  Let
  $\mathcal{C}=\Sigma_0\;\stackrel{\rho_0,\sigma_0}{\longrightarrow}\;
  \Sigma_1 \;\stackrel{\rho_1,\sigma_1}{\longrightarrow}\ldots
  \stackrel{\rho_j,\sigma_j}{\longrightarrow}\; \Sigma_{j+1}$ be a
  $\kind{GW}$ computation.
  \begin{enumerate}
    \item For any $\ell$ and $i\le j+1$, the sum over all $e,a$ of the
    multiplicities of all facts $\opr{unseen}(T,\ell)$,
    $\opr{recovered}(T,\ell,e)$, $\opr{aborted}(T,\ell,a)$ is unchanged: 
    \begin{eqnarray*} 
      1\ge\multiplicity{\opr{unseen}(T,\ell)}{\Sigma_0} & = & 
       \sum_{a,e}\big(\,\multiplicity{\opr{unseen}(T,\ell)}{\Sigma_i} +
       \multiplicity{\opr{recovered}(T,\ell,e)}{\Sigma_i} \\
       & + & \multiplicity{\opr{aborted}(T,\ell,a)}{\Sigma_i}\,\big).
    \end{eqnarray*}
    \item $\multiplicity{\opr{recovered}(T,\ell,e)}{\Sigma_i}=1$ iff
    $\exists k<i$,
    $\applysub{\kind{lab}(\rho_k)}{\sigma_k}={\opr{rcvr}(T,\ell,e)}$.
    \item $\multiplicity{\opr{aborted}(T,\ell,a)}{\Sigma_i}=1$ iff
    $\exists k<i$,
    $\applysub{\kind{lab}(\rho_k)}{\sigma_k}={\opr{abrt}(T,\ell,e)}$.
    \item If $\exists i$,
    $\applysub{\kind{lab}(\rho_i)}{\sigma_i}={\opr{rcvr}(T,\ell,e)}$,
    then $\forall k,a$,
    $\applysub{\kind{lab}(\rho_k)}{\sigma_k}\not={\opr{abrt}(T,\ell,a)}$.
    \label{clause:rcvr:abort:exclusive}
    \item If $\exists i$,
    $\applysub{\kind{lab}(\rho_i)}{\sigma_i}={\opr{frcvr}(T,\ell,e)}$,
    then $\exists k<i$,
    $\applysub{\kind{lab}(\rho_k)}{\sigma_k}={\opr{rcvr}(T,\ell,e)}$.
    \item If $\exists i$,
    $\applysub{\kind{lab}(\rho_i)}{\sigma_i}={\opr{fabrt}(T,\ell,a)}$,
    then $\exists k<i$,
    $\applysub{\kind{lab}(\rho_k)}{\sigma_k}={\opr{abrt}(T,\ell,a)}$.  
    \item If $\opr{unseen}(T,\ell)\in\Sigma_0$, then every session
    $\ell$ request to $T$ in Fig.~\ref{fig:wang:ttp} can proceed on
    some branch.  \label{clause:T:progresses}
  \end{enumerate}
\end{lemma}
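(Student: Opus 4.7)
The plan is to establish the seven clauses in order, with clause~1 serving as the quantitative backbone. Each argument reduces to case-analysis over the six rules of $R_W$, combined with a straightforward induction on the length of $\mathcal{C}$.

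For clause~1, the key observation is that every rule preserves the quantity
$$N_\ell(\Sigma)=\multiplicity{\opr{unseen}(T,\ell)}{\Sigma}+\sum_{e}\multiplicity{\opr{recovered}(T,\ell,e)}{\Sigma}+\sum_{a}\multiplicity{\opr{aborted}(T,\ell,a)}{\Sigma}.$$
The first variants of $\opr{rcvr}$ and $\opr{abrt}$ exchange one $\opr{unseen}$ for one $\opr{recovered}$ (or $\opr{aborted}$), while the other four variants rewrite a fact to itself; rules whose session label differs from $\ell$ touch none of these facts. Induction on $i$ yields $N_\ell(\Sigma_i)=N_\ell(\Sigma_0)\le 1$, the bound coming from Definition~\ref{def:wang:initial}.

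Clauses~2 and~3 are symmetric; I describe only clause~2. For the ($\Leftarrow$) direction, inspection shows that $\opr{recovered}(T,\ell,e)$ appears on the RHS only of the two $\opr{rcvr}$ variants and of $\opr{frcvr}$, and the latter together with the second $\opr{rcvr}$ variant already require it as a premise. Taking the least $k$ at which the fact is present and using that $\Sigma_0$ contains no $\opr{recovered}$ facts, the transition into $\Sigma_k$ must be labeled $\opr{rcvr}(T,\ell,e)$. For ($\Rightarrow$), once such a transition has fired, clause~1 forces the total mass $N_\ell\equiv 1$ to be concentrated on $\opr{recovered}(T,\ell,e)$: $\opr{unseen}(T,\ell)$ and all $\opr{aborted}(T,\ell,a)$ are absent, so the only applicable $\ell$-rules are the second $\opr{rcvr}$ variant and $\opr{frcvr}$, both of which preserve $\opr{recovered}(T,\ell,e)$. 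Clause~4 falls out of the same dichotomy: after either $\opr{rcvr}(T,\ell,e)$ or $\opr{abrt}(T,\ell,a)$ has fired, clause~1 leaves zero multiplicity for both $\opr{unseen}(T,\ell)$ and the competing fact, so neither premise of the opposite rule is ever satisfied again.

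Clauses~5 and~6 are immediate: $\opr{frcvr}$ requires $\opr{recovered}(T,\ell,e)$ in its premise state, and clause~2 supplies the earlier $\opr{rcvr}(T,\ell,e)$; $\opr{fabrt}$ is analogous via clause~3. For clause~7, the hypothesis $\opr{unseen}(T,\ell)\in\Sigma_0$ gives $N_\ell\equiv 1$ throughout $\mathcal{C}$, so in every reachable state exactly one of $\opr{unseen}(T,\ell)$, $\opr{recovered}(T,\ell,e)$, $\opr{aborted}(T,\ell,a)$ is present. Checking Fig.~\ref{fig:wang:ttp} branch by branch, for each of the three request types the two depicted branches are guarded by events whose premises together cover all three possible states of session $\ell$; hence at least one branch of each request is always enabled. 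The only step that requires genuine care is stating clause~1 strongly enough (tracking both the sum and the upper bound~$1$) so that clauses~2, 4, and~7 drop out mechanically; the remaining work is bookkeeping.
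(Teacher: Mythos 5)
Your proof is correct; the paper itself gives no proof of this lemma, presenting the clauses as ``obvious consequences of the definitions,'' and your conserved quantity $N_\ell$ together with the rule-by-rule case analysis and induction on the length of $\mathcal{C}$ is precisely the argument that the paper's prose sketch intends. The only blemish is that in clauses 2--3 you have swapped the labels $(\Rightarrow)$ and $(\Leftarrow)$ relative to the ``iff'' as stated---the argument you call $(\Leftarrow)$ proves that presence of the fact implies a prior $\opr{rcvr}$ event, and vice versa---but both directions are in fact argued correctly, so this is purely a labeling slip.
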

%

\myparagraph{Initiator and Responder State.}  The initiator and
responder have rules with empty precondition, that simply deposit
records values into their state.  These records are of the forms
$\kind{eor}(A,\ell,\EOR,M,K,R)$, $\kind{eoo}(B,\ell,\EOO,M,K,R)$, and
$\kind{aborted}(P,\ell,\tagged{\tagged{\opr{ab\_rq}\cons
    \h(\ell)}A}T)$.  The last is used both by the initiator and the
responder.  The rules are:
$$
\xymatrix@R=2mm{
  { \cdot\qquad }\ar[rrrr]^{{\mathsf{depEOR}(A,\ell,e,M,K,R)}} & & & & {
    \kind{eor}(A,\ell,e,M,K,R) }
  \\
  { \cdot\qquad }\ar[rrrr]^{{\mathsf{depEOO}}(B,\ell,e,M,K,R)} & & & & {
    \kind{eoo}(B,\ell,e,M,K,R) }
  \\
  { \cdot\qquad }\ar[rrrr]^{{\mathsf{depAT}}(P,\ell,a)} & & & & {
    \kind{aborted}(P,\ell,a) }}
$$


\section{Progress Assumptions}
\label{sec:progress}

We introduce two kinds of progress properties for protocols.  One of
them (Def.~\ref{defn:guar:del}) formalizes the idea that certain
messages, if sent, must be delivered to a regular participant,
i.e.~that these messages traverse resilient channels.  The second is
the idea that principals, at particular nodes in a strand, must
progress.  We will stipulate that a principal whose next step is a
state event, and the current state satisfies the right hand side of
the associated rule, then the principal will always take either that
step or another enabled step.  It is formalized in
Def.~\ref{def:stable}.   

\begin{definition}
  \label{defn:guar:del}
  Suppose that $\Pi$ is a protocol, and $G$ is a set of nodes
  $\strandnode{s}{i}$ such that for all $\strandnode{s}{i}\in G$, $s$
  is a role of $\Pi$ and $\strandnode{s}{i}$ is a transmission node.
  Then $G$ is a set of \emph{guaranteed delivery assumptions} for
  $\Pi$.

  A transmission node $n$ on a strand $s'$ is a \emph{guaranteed
    delivery node} for $\Pi,G$ if it is an instance
  $n=\applysub{(\strandnode{s}{i})}\alpha$ of the $i^{\mathrm{th}}$
  node of some role $s\in\Pi$, and $\strandnode{s}{i}\in G$.

  Let $\bnd$ be a bundle for $\Pi$.  $\bnd$ \emph{satisfies guaranteed
    delivery} for $G$ if, for every guaranteed delivery node
  $n\in\bnd$, there is a unique node $m\in\bnd$ such that
  $n\rightarrow_{\bnd}m$, and moreover $m$ is regular.   
\end{definition}

There are three ingredients here.  First, $n$'s transmission should be
received somewhere.  Second, it should be received at most once.
Finally, the recipient should be regular.  For our progress condition,
however, we want a stronger condition than this guaranteed delivery
property.  In particular, we also want to stipulate that if a
guaranteed transmission node can be added, and its message can be
delivered, then it will be added together with one matching reception
node. However, for this we need to define the right notion of ``can.'' 
%
%
%
Thus, we define the unresolved nodes of a bundle, using $n\sim m$,
which means that $n$ and $m$ are similar in the following sense:
\begin{definition}
  \label{def:similar}
  Regular nodes $n',m'$ are \emph{similar}, written $n'\sim m'$, if
  the initial segments of the strands they lie on, 
  $n\Rightarrow\ldots\Rightarrow n'$ and
  $m\Rightarrow\ldots\Rightarrow m'$,
  (1) are of the same length; (2) corresponding nodes have the same
  direction (transmission, reception, or state synchronization); and
  (3) corresponding nodes have the same message or state
  synchronization event label.

  A regular node $n_0$ is \emph{unresolved in} $\bnd$ if
  $n_0\Rightarrow n_1$ and for some $n_0'\in\bnd$, $n_0'\sim n_0$ but
  for all $n_1'\in\bnd$, $n_0'\not\Rightarrow n_1'$.  
\end{definition}
A node $n_0$ is unresolved if it \emph{can} progress to some $n_1$,
but a similar $n_0'\in\bnd$ \emph{has not} progressed.  Thus,
substituting a similar node for a node in $\bnd$, we obtain a bundle
$\bnd'$ to which this transition may be added.
\begin{definition}
  \label{def:stable}
  Let $\mathcal{E}=(\bnd,\mathcal{C},\phi)$ be an execution of $\Pi,G$
  constrained by $R$.  $\mathcal{E}$ is a \emph{stable execution} if
  (1) $\bnd$ satisfies guaranteed delivery for $G$; (2) there are no
  enabled transmission edges for $\bnd$; and (3) there are no enabled
  state edges for $\mathcal{E}$, where we define \emph{enabled
    transmission and state edges} as follows:
  \begin{enumerate}
    \item $n_0\Rightarrow n_1$ is an \emph{enabled transmission edge
      for} $\mathcal{E}$ if:
    \begin{enumerate}
    \item $n_0$ is unresolved in $\bnd$; 
    \item $n_1$ is a guaranteed delivery node; and 
    \item there is a regular reception node $m_1$ with
      $\term(m_1)=\term(n_1)$ where either
      \begin{enumerate}
        \item $m_1$ is the first node on its strand, or else
        \item $m_0\Rightarrow m_1$, where $m_0$ is unresolved in
        $\bnd$.
      \end{enumerate}
    \end{enumerate}
  \item $n_0\Rightarrow n_1$ is an \emph{enabled state edge for}
    $\mathcal{E}$ if:
    \begin{enumerate}
    \item $n_0$ is unresolved in $\bnd$;
    \item $n_1$ is a state synchronization node with event $E(p,\vec{t})$; and
    \item $\exists \rho\in R$ and $\sigma$
      s.t. $\applysub{\kind{lab}(\rho)}\sigma=E(p,\vec{t})$ and
      $\applysub{\kind{lhs}(\rho)}{\sigma}\subseteq\kind{last}(\mathcal{C})$.
    \end{enumerate}
  \end{enumerate}
\end{definition}
In a stable execution, each strand has reached a ``stopping point,''
where no transmission with guaranteed delivery (and matching
reception) is waiting to happen, and no state synchronization event is
waiting to happen.
A protocol $\Pi$ and rules $R$ drive the evolution of state through
states satisfying some balance property $\Psi$ means that when
$\mathcal{E}=(\bnd,\mathcal{C},\phi)$ is a stable execution for
$\Pi,R$, and $\Psi(\kind{first}(\mathcal{C}))$, then
$\Psi(\kind{last}(\mathcal{C}))$.

\myparagraph{Guaranteed Delivery for Wang's Protocol.}  The guaranteed
delivery assumptions for Wang's protocol are not surprising.  They are
the messages transmitted on resilient channels between the principals
and the Trusted Third Party.  These are $A$'s transmission of $\AR$
and $B$'s transmission of $\RR$ in Fig~\ref{fig:wang:init:resp}, and
$T$'s six transmissions in Fig.~\ref{fig:wang:ttp}.

\myparagraph{Progress in Wang's Protocol.}  No protocol can protect
principals that do not follow it.  Thus, correctness conditions are
stated for stable executions in which at least one of $A,B$ comply
with the protocol.  We also assume that the trusted third party $T$
merits trust, and also complies with the protocol.  A principal $P$ is
\emph{compliant} in a bundle $\bnd$ if $P\in\{A,B\}$ and $P$'s signing
key is used only in accordance with $\Pi_{GW}$ in $\bnd$; or if $P=T$,
the trusted third party, and $T$'s signing and decryption keys are
used only in accordance with $\Pi_{GW}$ in $\bnd$.

Henceforth, let $\mathcal{E}=(\bnd,\mathcal{C},\phi)$ be a
$\opr{GW}$-execution.  Let $\Sigma_0$ and $\Sigma_j$ be the first and
last states of $\mathcal{C}$.  For each label $\ell$ occurring in an
$A$ initiator strand or a $B$ responder strand in $\bnd$, assume that
${\opr{unseen}(T,\ell)}\in\Sigma_0$.

\begin{lemma}[$\opr{GW}$ Progress]
  \label{thm:progress}
  Let $S$ be a set of principals compliant in $\mathcal{E}$, with
  $T\in S$.  There exists a stable
  $\mathcal{E}'=(\bnd',\mathcal{C}',\phi')$, such that (1)
  $\mathcal{E}'$ extends $\mathcal{E}$; (2) the principals $S$ are
  compliant in $\mathcal{E}'$; and (3) $p=T$ if $p$ is the principal
  of any regular strand of $\bnd'$ that does not appear in $\bnd$.

  If $s$ is an initiator or TTP strand with $\bnd'$-height $\ge 1$,
  then its $\bnd'$-height is its full length.  If $s$ is a responder
  strand with $\bnd'$-height $\ge 2$, then its $\bnd'$-height is its
  full length.
\end{lemma}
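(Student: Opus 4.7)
The plan is to construct $\mathcal{E}' = (\bnd', \mathcal{C}', \phi')$ by iteratively extending $\mathcal{E}$ using three extension moves until none applies, yielding the stable execution by maximality. The moves are: (a) for any unresolved guaranteed-delivery transmission $n$ on a compliant regular strand, insert a fresh compliant $T$-strand whose first reception node matches $\term(n)$ if no regular recipient is already present, and add the arrow from $n$ to that recipient; (b) advance any unresolved compliant regular strand by one node, provided that the new node is a reception matched by an existing transmission, a transmission that can be matched via~(a), or a state-synchronization event whose underlying rule is enabled at $\kind{last}(\mathcal{C})$---appending the corresponding transition to $\mathcal{C}$ in the last case; (c) when the height condition requires an $A$-initiator (resp.~$B$-responder) strand to progress but its next main-path reception cannot be matched, choose its abort (resp.~recovery) branch and apply~(b).

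Why the preconditions of these moves can always be met: the initiator and responder state events $\mathsf{depEOR}$, $\mathsf{depEOO}$, $\mathsf{depAT}$ have empty left-hand sides, so their transitions are trivially enabled; and for every label $\ell$ in play, the hypothesis $\opr{unseen}(T,\ell)\in\Sigma_0$ together with Lemma~\ref{lemma:GW:computation} clause~\ref{clause:T:progresses} guarantees that each $T$-request can proceed on at least one branch at every reachable state. Guaranteed delivery for $\AR$, $\RR$, and $T$'s six responses then lets us populate the required receptions via~(a). Termination follows from a lexicographic measure summing, over each compliant regular strand currently in the bundle, the difference between its chosen full role length and its current height, plus the number of pending guaranteed-delivery transmissions; each move strictly decreases this measure, and the number of fresh $T$-strands added is bounded above by the number of $\AR$, $\RR$, and $\CF$ transmissions produced by the compliant $A$- and $B$-strands.

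At the fixed point, the three stability conditions follow by contradiction, since any violation would expose an applicable move; the height conditions on $A$- and $B$-strands are enforced by~(c); clause~(3) of the lemma holds because every new regular strand is a fresh $T$-strand added by~(a); and compliance of $S$ in $\mathcal{E}'$ is preserved because every node added to a strand of some $p\in S$ conforms to $\Pi_{GW}$ (vacuously for $p=T$ by assumption, and by construction for $p\in\{A,B\}$).

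The hard part is justifying the branch choice in~(c): showing that every $A$-strand of height $\ge 1$ admits a completable abort branch, and likewise every $B$-strand of height $\ge 2$ admits a completable recovery branch. For $A$, after $A$ sends $\AR$, move~(a) inserts a $T$-strand that receives it; by Lemma~\ref{lemma:GW:computation} clauses~\ref{clause:rcvr:abort:exclusive} and~\ref{clause:T:progresses}, $T$'s state either still contains $\opr{unseen}(T,\ell)$, so the $\opr{abrt}$ branch fires and sends $\AT$ back to $A$ (who completes with $\mathsf{depAT}$), or $T$ has already executed $\opr{rcvr}(T,\ell,\EOR)$, in which case the forced-recovery branch $\opr{frcvr}$ returns $\EOR$ and $A$ completes with $\mathsf{depEOR}$. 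The $B$-responder case is symmetric, with $\RR$ in place of $\AR$.
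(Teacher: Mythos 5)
Your proposal is correct and follows essentially the same route as the paper's proof: progress hinges on the abort/recovery branches being guaranteed-delivery transmissions to $T$, on Lemma~\ref{lemma:GW:computation}, Clause~\ref{clause:T:progresses} ensuring some $T$ state synchronization is always enabled so $T$ replies with $\AT$, $\EOR$, or $K\cons R$, and on the deposit events having empty preconditions so $A$ and $B$ can always finish. You additionally make explicit the iterative extension construction and a termination measure that the paper's one-paragraph argument leaves implicit, which is a reasonable and compatible elaboration rather than a different approach.
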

\begin{proof}
  Inspecting Fig.~\ref{fig:wang:init:resp}, we see that an initiator
  strand of $\bnd$-height 1 may progress by sending a
  guaranteed-delivery $\AR$, which is also possible for an initiator
  strand that has received $\EOR$.  The guaranteed delivery rule
  requires the first node of some $T$ strand receiving $\AR$.  By
  Lemma~\ref{lemma:GW:computation}, Clause~\ref{clause:T:progresses},
  some $T$ state synchronization event is enabled, after which $T$
  makes a guaranteed-delivery transmission.  Thus, $A$ receives $\AT$
  or $\EOR$.  Since its deposit state synchronization events have
  empty precondition, $A$ will complete its strand.
  The analysis for responder strands is similar.  
\end{proof}
That is, we may regard starting a strand in $\bnd$, or---for a
responder---sending its $\EOR$ message, as a promise to progress
regularly in the future, as required by Def.~\ref{def:stable}.
Moreover, new strands that begin in $\bnd'$, not $\bnd$, belong only
to the TTP $T$.  In $\bnd'$, these strands have terminated by reaching
its full length.


\section{Correctness of Wang's protocol}
\label{sec:correctness}

We now summarize our conclusions in a theorem that puts together the
different elements we have discussed.  
\begin{thm}\label{thm:b:completes}
  Let $\mathcal{E}=(\bnd,\mathcal{C},\phi)$ be a \emph{stable}
  $\opr{GW}$-execution with $\opr{unseen}(T,\ell)\in\Sigma_0$.  
  \begin{enumerate}
    \item If $\kind{eoo}(B,\ell,\EOO,M,K,R)\in\Sigma_j$ but
    $\not\in\Sigma_0$, then for compliant $A$,
    $\kind{eor}(A,\ell,\EOR,M,K,R)\in\Sigma_j$.
  \item If $\kind{eor}(A,\ell,\EOR,M,K,R)\in\Sigma_j$ but
    $\not\in\Sigma_0$, then for compliant $B$, either
    $\kind{eoo}(B,\ell,\EOO,M,K,R)\in\Sigma_j$ or else
    $\kind{aborted}(B,\ell,\AT)\in\Sigma_j$.  

  \end{enumerate}
 \end{thm}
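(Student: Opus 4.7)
The plan is to exploit Lemma~\ref{thm:progress}: in the stable execution $\mathcal{E}$, any initiator strand that has started at all, and any responder strand that has reached its second node (i.e.~transmitted $\EOR$), must have run to full length in $\bnd$. Both deposit facts in the theorem correspond to state synchronization nodes that can only appear at a terminal of the respective strand. Once we have forced the relevant strands to completion, we read off the possible terminals from Fig.~\ref{fig:wang:init:resp}, then use Lemmas~\ref{lemma:auth:init:resp} and~\ref{lemma:auth:ttp} to move between $A$'s, $B$'s, and $T$'s strands, and finally use Lemma~\ref{lemma:GW:computation} to rule out incompatible combinations of $T$-state events.

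For part~(1), assume $\kind{eoo}(B,\ell,\EOO,M,K,R)\in\Sigma_j\setminus\Sigma_0$. The corresponding ${\opr{depEOO}}$ event sits on a compliant responder strand of $B$ with those parameters. By Lemma~\ref{lemma:auth:init:resp}(2), compliant $A$ has begun an initiator strand with matching parameters; by Lemma~\ref{thm:progress}, this strand reaches its full length in $\bnd$. Inspecting Fig.~\ref{fig:wang:init:resp}, every terminal of an initiator strand is either ${\opr{depEOR}}$ or ${\opr{depAT}}$. I would rule out the ${\opr{depAT}}$ case by contradiction: it requires $A$ to receive $\AT$, so by Lemma~\ref{lemma:auth:ttp}(1) some $T$-strand transmitted $\AT$. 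In Fig.~\ref{fig:wang:ttp}, every such transmission is guarded by either an $\opr{abrt}(T,\ell,\cdot)$ event or an $\opr{fabrt}(T,\ell,\cdot)$ event, and the latter presupposes the former by Lemma~\ref{lemma:GW:computation}, so an $\opr{abrt}$ event for $\ell$ has occurred. On the other hand, Lemma~\ref{lemma:auth:ttp}(2) applied to $B$'s ${\opr{depEOO}}$ says that either $A$ or $T$ transmitted $K\cons R$; $A$ transmits $K\cons R$ only on the branch ending at ${\opr{depEOR}}$, so the contradictory assumption forces $T$ to have done so; Fig.~\ref{fig:wang:ttp} then says $T$ executed an $\opr{rcvr}(T,\ell,\cdot)$ event, contradicting Lemma~\ref{lemma:GW:computation}(\ref{clause:rcvr:abort:exclusive}). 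Hence $A$ terminates at ${\opr{depEOR}}$, and the deposit rule for ${\opr{depEOR}}$ places $\kind{eor}(A,\ell,\EOR,M,K,R)$ in $\Sigma_j$.

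Part~(2) is easier. From $\kind{eor}(A,\ell,\EOR,M,K,R)\in\Sigma_j\setminus\Sigma_0$, the ${\opr{depEOR}}$ event sits at a terminal of a compliant initiator strand of $A$. By Lemma~\ref{lemma:auth:init:resp}(1), compliant $B$ has executed at least the first two nodes of a responder strand---transmitting $\EOR$---with matching parameters, so its $\bnd$-height is $\ge 2$. By Lemma~\ref{thm:progress}, its full length is present, and every terminal of a responder strand in Fig.~\ref{fig:wang:init:resp} is either ${\opr{depEOO}}$ or ${\opr{depAT}}$, which yields precisely the required disjunction.

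The main obstacle I anticipate is not the logical skeleton above but the bookkeeping of ``matching parameters'' across the three authentication lemmas: one must check that the arguments $M,K,R$ deposited by $B$ are provably the same parameters used on $A$'s initiator strand, and conversely that the $M,K,R$ deposited by $A$ are those on $B$'s responder strand. This depends on uniqueness of the freshly chosen $K$ and recoverable randomness $R$, and on injectivity of the packaging in $L$, $\EM=\enc{M}K$, $\EK=\encR{\opr{keytag}\cons\h(L)\cons K}{T}{R}$, $\EOO$, and $\EOR$. Once this matching is threaded through Lemmas~\ref{lemma:auth:init:resp} and~\ref{lemma:auth:ttp}, the case analysis above produces the conclusion.
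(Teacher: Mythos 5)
Your proposal is correct and follows essentially the same route as the paper's own proof: both parts use the deposit rules to locate the $\opr{depEOO}$/$\opr{depEOR}$ node, Lemma~\ref{lemma:auth:init:resp} to reach the peer's strand, Lemma~\ref{thm:progress} plus stability to force that strand to full length, and (for part~1) the combination of Lemma~\ref{lemma:auth:ttp} with Lemma~\ref{lemma:GW:computation}(\ref{clause:rcvr:abort:exclusive}) to exclude the $\opr{depAT}$ terminal. Your version merely makes explicit the intermediate step that $T$'s $\AT$ transmission presupposes an $\opr{abrt}$ event (via the $\opr{fabrt}$-implies-$\opr{abrt}$ clause), which the paper leaves implicit.
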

 \begin{proof}
   1.  By the state rules for $B$, $\kind{depEOO}(B,\ell,e,M,K,R)$ has
   occurred in $\mathcal{C}$.  Hence, $B$ has reached one of the two
   $\kind{depEOO}()$ nodes shown in Fig~\ref{fig:wang:init:resp}, with
   parameters $B,\ell,e,M,K,R$.  Hence, by
   Lemma~\ref{lemma:auth:init:resp}, Clause~2, $A$ has executed at
   least the first node of an initiator strand, transmitting $\EOO$,
   on a strand with matching parameters.  Since $\mathcal{E}$ is
   stable, by Thm.~\ref{thm:progress}, $A$'s strand has full height.
   Thus, either $\kind{depEOR}()$ or $\kind{depAT}()$ has occurred
   with matching parameters.  

   However, if $\kind{depAT}()$ has occurred at $A$, then $A$ does not
   transmit $K\cons R$.  Moreover, since $A$ has received $\AT$, $T$
   has transmitted $\AT$ by Lemma~\ref{lemma:auth:ttp}, Clause~1.
   Hence, by Lemma~\ref{lemma:GW:computation},
   Clause~\ref{clause:rcvr:abort:exclusive}, $\mathcal{C}$ does not
   contain a $\opr{rcvr}(T,\ell,e)$ event.  Thus, contrary to
   Lemma~\ref{lemma:auth:ttp}, Clause~2, $T$ has not transmitted
   $K\cons R$.  Hence, $\kind{depEOR}()$ has occurred.

   2.  By the state rules for $A$, $\kind{depEOR}(A,\ell,e,M,K,R)$ has
   occurred in $\mathcal{C}$.  Hence, $A$ has reached one of the two
   $\kind{depEOR}()$ nodes shown in Fig~\ref{fig:wang:init:resp}, with
   parameters $A,\ell,e,M,K,R$.  Hence, by
   Lemma~\ref{lemma:auth:init:resp}, Clause~1, $B$ has executed at
   least the first two nodes of a responder strand, transmitting
   $\EOR$, on a strand with matching parameters.  Since $\mathcal{E}$
   is stable, by Thm.~\ref{thm:progress}, $B$'s strand has full
   height.  Thus, either $\kind{depEOO}()$ or $\kind{depAT}()$ has
   occurred at $B$ with matching parameters.
%
%
 \end{proof}
\myparagraph{Conclusion.}  This formalism has also been found to be
convenient to model the interface to a cryptographic device, the
Trusted Platform Module, which combines cryptographic operations with
a repository of state.  Thus, it appears to be  a widely applicable
approach to the problem of combining reasoning about cryptographic
protocols with reasoning about state and histories.



{
  \bibliography{secureprotocols.bib}
  \ifieee
  \bibliographystyle{latex8}               
  \else 
  \bibliographystyle{plain}               
  \fi
}

\appendix
\section{Messages and Protocols}
\label{sec:strands} 

In this appendix, we provide an overview of the current strand space
framework; this section is essentially identical with part
of~\cite{Guttman09}.  

\paragraph{Message Algebra.}  Let $\Algebra_0$ be an algebra equipped
with some operators and a set of homomorphisms
$\eta\colon\Algebra_0\rightarrow\Algebra_0$.  We call members of
$\Algebra_0$ \emph{atoms}.

For the sake of definiteness, we will assume here that $\Algebra_0$ is
the disjoint union of infinite sets of \emph{nonces}, \emph{atomic
  keys}, \emph{names}, and \emph{texts}.  The operator $\kind{sk}(a)$
maps names to (atomic) signature keys, and $K^{-1}$ maps an asymmetric
atomic key to its inverse, and a symmetric atomic key to itself.
Homomorphisms $\eta$ are maps that respect sorts, and act
homomorphically on $\kind{sk}(a)$ and $K^{-1}$.

Let $X$ is an infinite set disjoint from $\Algebra_0$; its
members---called \emph{indeterminates}---act like unsorted variables.
$\Algebra$ is freely generated from $\Algebra_0\cup X$ by two
operations: encryption $\enc{t_0}{t_1}$ and tagged concatenation
$\tagname{tag} t_0\cons t_1$, where the tags $\tagname{tag}$ are drawn
from some set $\mathit{TAG}$.  For a distinguished tag
$\tagname{nil}\!\!$, we write $\tagname{nil}\; t_0\cons t_1$ as
$t_0\cons t_1$ with no tag.  In $\enc{t_0}{t_1}$, a non-atomic key
$t_1$ is a symmetric key.  Members of $\Algebra$ are called
\emph{messages}.

A homomorphism $\alpha=(\eta,\chi)\colon\Algebra\rightarrow\Algebra$
consists of a homomorphism $\eta$ on atoms and a function $\chi\colon
X\rightarrow\Algebra$.  It is defined for all $t\in\Algebra$ by the
conditions:
\begin{quote}
  \begin{tabular}{r@{$\;=\;$}l@{\quad}l@{\qquad\qquad}r@{$\;=\;$}l}
    $\applyrep{a}{\alpha}$ & $\eta(a)$, & if $a\in\Algebra_0$ &
    $\applyrep{\enc{t_0}{t_1}}{\alpha}$ &
    $\enc{\applyrep{t_0}{\alpha}}{\applyrep{t_1}{\alpha}}$    \\
    $\applyrep{x}{\alpha}$ & $\chi(x)$, & if $x \in X$ & 
    $\applyrep{\tagname{tag} t_0\cons t_1}{\alpha}$ & 
    $\tagname{tag} \applyrep{t_0}{\alpha}\cons
    \applyrep{t_1}{\alpha}$
\end{tabular}
\end{quote}
Thus, atoms serve as typed variables, replaceable only by other values
of the same sort, while indeterminates $x$ are untyped.
Indeterminates $x$ serve as blank slots, to be filled by any
$\chi(x)\in\Algebra$.  Indeterminates and atoms are jointly
\emph{parameters}.

Messages are abstract syntax trees in the usual way:
\begin{enumerate}
\item Let $\ell$ and $r$ be the partial functions such that for
  $t=\enc{t_1}{t_2}$ or $t=\tagname{tag}{t_1}\cons{t_2}$,
  $\ell(t)=t_1$ and $r(t)=t_2$; and for $t\in\Algebra_0$, $\ell$ and
  $r$ are undefined.
  \item A \emph{path} $p$ is a sequence in $\{\ell,r\}^{*}$.  We
  regard $p$ as a partial function, where $\seq{}=\kind{Id}$ and
  $\kind{cons}(f,p)=p\circ f$.  When the rhs is defined, we have:  1.
  $\seq{}(t)=t$; 2.  $\kind{cons}({\ell},p)(t)=p(\ell(t))$; and 3.
  $\kind{cons}({r},p)(t)=p(r(t))$.
  \item $p$ \emph{traverses a key edge} in $t$ if $\pleads{t}{p_1}$ is
  an encryption, where $p=p_1\supfrown \seq{r}\supfrown p_2$.
  \item $p$ \emph{traverses a member of} $S$ if $\pleads{t}{p_1}\in
  S$, where $p=p_1\supfrown p_2$ and $p_2\not=\seq{}$.
  \item $t_0$ \emph{is an ingredient of} $t$, written $t_0\subterm t$,
  if $t_0=\pleads{t}{p}$ for some $p$ that does not traverse a key
  edge in $t$.%
%
%
\item $t_0$ \emph{appears in} $t$, written $t_0\ll t$, if
  $t_0=\pleads{t}{p}$ for some $p$.
\end{enumerate}
A single local session of a protocol at a single principal is a
\emph{strand}, containing a linearly ordered sequence of
transmissions, receptions, and state synchronization events that we
call \emph{nodes}.  In
Figs.~\ref{fig:wang:init:resp}--\ref{fig:wang:ttp}, the columns of
nodes connected by double arrows $\Rightarrow$ are strands.

\begin{Assum}
  Strands and nodes are disjoint from $\Algebra$. 
\end{Assum}
A message $t_0$ \emph{originates} at a node $n_1$ if (1) $n_1$ is a
transmission node; (2) $t_0\subterm\term(n_1)$; and (3) whenever
$n_0\Rightarrow^{+}n_1$, $t_0\not\subterm\term(n_0)$.

Thus, $t_0$ originates when it was transmitted without having been
either received, transmitted, or synchronized previously on the same
strand.  Values assumed to originate only on one node in an
execution---\emph{uniquely originating} values---formalize the idea of
freshly chosen, unguessable values.  Values assumed to originate
nowhere may be used to encrypt or decrypt, but are never sent as
message ingredients.  They are called \emph{non-originating} values.
For a non-originating value $K$, $K\not\subterm t$ for any transmitted
message $t$.  However, $K\ll\enc{t_0}{K}\subterm t$ possibly, which is
why we distinguish $\subterm$ from $\ll$.
See~\cite{GuttmanThayer02,DoghmiGuttmanThayer07} for more details.

\paragraph{Protocols.} A \emph{protocol} $\Pi$ is a finite set of
strands, representing the roles of the protocol.  Their instances
result by replacing $A,B,K,M$, etc., by any names, symmetric key,
text, etc.  Each protocol also contains the \emph{listener} role
$\Hear{}{y}$ with a single reception node in which $y$ is received.
The instances of $\Hear{}{y}$ are used to document that values are
available without cryptographic protection.

Indeterminates represent messages received from protocol peers, or
passed down as parameters from higher-level protocols.  Thus, we
require:
\begin{description}
\item[If] $n_1$ is a node on $\rho\in\Pi$, with an indeterminate
  $x\ll\term(n_1)$,
\item[then] $\exists n_0$, $n_0\Rightarrow^{*}n_1$, where $n_0$ is a
  reception node and $x\subterm\term(n_0)$.
\end{description}
So, an indeterminate is received as an ingredient before appearing in
any other way.  
We say that a strand $s$ is \emph{in} $\bnd$ if $s$ has at least one
node in $\bnd$.  
\begin{prop} 
\label{lemma:order}
Let $\bnd$ be a bundle.  $\preceq_{\bnd}$ is a well-founded partial
order.  Every non-empty set of nodes of $\bnd$ has
$\preceq_{\bnd}$-minimal members.  If $a\subterm\term(n)$ for any
$n\in\bnd$, then $a$ originates at some $m\preceq_{\bnd}n$.  
\end{prop}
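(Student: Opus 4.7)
The plan is to leverage the structural finiteness and acyclicity built into the definition of a bundle, then use the resulting minimality principle as a tool for the origination claim.

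First, I would address the partial-order and well-foundedness claims. Recall that $\preceq_\bnd$ is defined as the reflexive-transitive closure of the edges $\Rightarrow$ and $\rightarrow$ restricted to $\bnd$. By the definition of bundle, the underlying directed graph is finite and acyclic. Transitivity and reflexivity hold by construction; antisymmetry follows from acyclicity, since $m \preceq_\bnd n$ and $n \preceq_\bnd m$ with $m \ne n$ would produce a directed cycle. Because $\bnd$ has only finitely many nodes, well-foundedness is immediate: any non-empty set $S$ of nodes must have a $\preceq_\bnd$-minimal member, obtained for instance by iteratively descending along incoming $\Rightarrow$ or $\rightarrow$ edges (which must terminate by finiteness and acyclicity).

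Second, for the origination claim, I would apply the minimality principle to the set $S = \{m \in \bnd : a \subterm \term(m)\}$. By hypothesis $n \in S$, so $S \ne \emptyset$ and contains a $\preceq_\bnd$-minimal element $m$. The goal is to show $m$ witnesses origination of $a$. The verification proceeds by case analysis on the kind of $m$. If $m$ were a reception node, the bundle property guarantees a unique transmission node $m' \in \bnd$ with $m' \rightarrow m$ and $\term(m') = \term(m)$; then $a \subterm \term(m')$ places $m' \in S$ with $m' \prec_\bnd m$, contradicting minimality. If $m$ were a state synchronization node, then $\term(m)$ is not defined in the sense of carrying message content eligible for $\subterm$, so this case is vacuous (or, if one extends $\term$ to facts, one observes no strictly earlier ingredient can arise from an even earlier strand node without violating minimality). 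Therefore $m$ must be a transmission node.

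Finally, I would establish the second clause of origination: for any $m_0$ with $m_0 \Rightarrow^+ m$, we need $a \not\subterm \term(m_0)$. But any such $m_0$ is in $\bnd$ by the third bundle condition (closure under $\Rightarrow$ predecessors) and satisfies $m_0 \prec_\bnd m$; so if $a \subterm \term(m_0)$ held, $m_0$ would be a strictly smaller member of $S$, again contradicting minimality. Hence $a$ originates at $m$, and $m \preceq_\bnd n$ by construction.

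The only delicate point is the state-synchronization case in the origination argument, which depends on how $\term$ and $\subterm$ are interpreted at such nodes; the intended reading is that only message-carrying (transmission or reception) nodes can satisfy $a \subterm \term(\cdot)$, making the case analysis clean. Everything else is a routine consequence of finiteness, acyclicity, and the definition of bundle.
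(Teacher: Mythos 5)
The paper does not actually prove this proposition; it is stated in the appendix as a standard fact of the strand space framework, with the reader referred to earlier work. Your overall strategy is exactly the standard argument from that literature: finiteness and acyclicity of the bundle graph give a well-founded partial order, and a minimality argument yields origination. The partial-order and well-foundedness parts of your write-up are fine.

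There is, however, one concrete flaw in the origination argument. You apply minimality to the set $S=\{m\in\bnd : a\subterm\term(m)\}$ and then assert at the end that the minimal element $m$ satisfies $m\preceq_{\bnd}n$ ``by construction.'' That does not follow: a $\preceq_{\bnd}$-minimal member of the \emph{whole} set $S$ need not be comparable to $n$ at all (e.g.\ $a$ could also occur on a strand causally unrelated to $n$, and your chosen minimal element could lie there). The standard and correct move is to minimize over $S'=\{m\in\bnd : m\preceq_{\bnd}n \mbox{ and } a\subterm\term(m)\}$, which is non-empty because $n\in S'$ and is downward-closed enough for the rest of your argument: if the minimal $m\in S'$ were a reception node, its matching transmission $m'$ satisfies $m'\prec_{\bnd}m\preceq_{\bnd}n$, so $m'\in S'$, contradicting minimality; and any $m_0\Rightarrow^{+}m$ with $a\subterm\term(m_0)$ likewise lies in $S'$. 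With that one-line repair your proof is complete; the rest of the case analysis, including the treatment of state synchronization nodes, is sound.
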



\end{document}
